\newtheorem{theorem}{Theorem}
\newtheorem{corollary}[theorem]{Corollary}
\newtheorem{lemma}[theorem]{Lemma}
\theoremstyle{remark}
\theoremstyle{definition}
\newtheorem{definition}[theorem]{Definition}
\title{Three new complexity results for resource allocation problems}
\author{Bart de Keijzer (B.deKeijzer@student.tudelft.nl)}
\begin{document}
\maketitle
\thispagestyle{plain}

\begin{abstract}
We prove the following results for task allocation of indivisible resources:
\begin{itemize}
\item The problem of finding a leximin-maximal resource allocation is in $\mathsf{P}$ if the agents have $\max$-utility functions and atomic demands.
\item Deciding whether a resource allocation is Pareto-optimal is $\mathsf{coNP}$-complete for agents with (1-)additive utility functions.
\item Deciding whether there exists a Pareto-optimal and envy-free resource allocation is $\Sigma_2^p$-complete for agents with (1-)additive utility functions.
\end{itemize}
\end{abstract}

\section{Introduction}\label{section:introduction}
In this text we prove complexity bounds for various problems in the field of resource allocation.
These results come forth from an attempt to prove two open problems that were stated in the work of Bouveret, Lang et al (\cite{aamasallocation05} and \cite{BouveretLangJAIR08}). The problems are about resource allocation. In a resource allocation problem we have a set of agents (or alternatively, players) and a set of resources (or equivalently, goods, tasks, items, etc.). The goal is to allocate the resources to the agents such that some requirements are satisfied. These requirements may vary. In our case we are interested in finding {\it fair} allocations. The concept of fairness is not clear, and there are different criteria for deciding whether or not an allocation is fair. Two of these are {\it envy-freeness} and {\it leximin-maximality}. We will define these criteria (formally) later on. In the problems we consider, the resources are indivisible and a resource can not be shared by two or more agents. 

The two open problems of the aforementioned papers that we consider are:

\begin{enumerate}
\item In \cite{aamasallocation05}: The problem of finding a leximin-maximal resource allocation for agents with $\max$-utility functions and atomic demands is in $\mathsf{NP}$. Could it be that it's in $\mathsf{NPC}$ (i.e. $\mathsf{NP}$-complete), or is it perhaps in $\mathsf{P}$?
\item In \cite{BouveretLangJAIR08}: What is the complexity of deciding whether there exists a Pareto-efficient and envy-free resource allocation, when the agents have additive utility functions?
\end{enumerate}

Some of the more technical notions we just mentioned will be defined and explained later in this text. We do, however, assume that the reader is acquainted with computational complexity theory (especially the classes $\mathsf{P}$, $\mathsf{NP}$,  $\mathsf{coNP}$, and the classes of the polynomial hierarchy), the matching problem for bipartite graphs, logic, and the satisfiability problem.

The first of these two problems is part of a quite an extensive series of problems and subproblems. The authors show for all of these problems that they are either in $\mathsf{P}$ or in $\mathsf{NPC}$. The only problem for which it remained an open question whether it is in $\mathsf{P}$ or in $\mathsf{NPC}$ (or possibly in between) is this one, where the agents have $\max$-utility and a leximin-optimal allocation must be found. In section \ref{section:leximin} we fill in the last open question of this series: we give a polynomial time algorithm for finding such an allocation, hence we prove that this problem is in $\mathsf{P}$\footnote{Of course we're talking about complexity classes for {\it decision} problems here. In \cite{aamasallocation05}, only the decision variant of this problem is considered. An algorithm from the decision variant of this problem is easily obtained if we have an algorithm for the optimization variant.}. 

The second problem is also part of a collection of problems that the authors prove complete for various complexity classes. This particular problem is again the last open problem in this series. We prove in section \ref{section:eef-existence} that this problem is $\Sigma_2^p$-complete (a class in the second level of the polynomial hierarchy) by a reduction from the complement of the language $\forall\exists\mathsf{3CNF}$ (that is a restriction of the more well-known problem known as $\mathsf{2QSAT}_{\forall}$ or $\mathsf{2TQBF}_{\forall}$): a complete problem for $\Pi_2^p$, which is naturally the complement of $\Sigma_2^p$.

In the process of trying to prove the $\Sigma_2^p$-completeness of the second problem, we stumbled on another interesting result, namely that the problem of deciding whether an allocation of resources to agents is Pareto-efficient (also called: Pareto-optimal, efficient) is $\mathsf{coNP}$-complete for agents with additive utility functions. We will give this proof in section \ref{section:paretooptimal}. $\mathsf{coNP}$-completeness of this problem has already been proved in the case of agents with $\geq 2$-additive utility functions (implied from \cite{k-additive}), but not yet in the case of (1-)additive utility functions.

\section{Leximin-maximal allocations with $\max$-utility and atomic demands}\label{section:leximin}

In this section, first, we make some definitions. After that we define the problem. Finally we give a polynomial time algorithm to solve the problem.

\subsection{Preliminaries}\label{preliminaries}

We first define formally the problem to solve. In a resource allocation problem, a set of resources must be divided among a set of agents. Such a division of resources to agents we call an {\it allocation}. 

The allocation must satisfy a certain set of {\it constraints}. Each agent has preferences on bundles of resources it may receive. The way these preferences are represented varies from setting to setting. In our case we use a cardinal preference structure: We represent the extent to which an agent values the bundle of resources he gets as real numbers. See for example \cite{mara-survey} for examples of preference structures.

Formally, we use the following definition for resource allocation settings:

\begin{definition}[(Indivisible) resource allocation setting]\label{def:allocationproblem}
An indivisible resource allocation problem instance is a 5-tuple $\langle A, \mathcal{O}, U, \mathcal{C}, u_c \rangle$, where $A = \{a_1, \ldots, a_n\}$ is a set agents, $\mathcal{O} = \{o_1, \ldots, o_m\}$ is a finite set of resources. $U = \{ u_1 , \ldots , u_n \}$ is a set of utility functions, $u_i$ is the utility function of agent $a_i$. For all $u \in U$, $u: 2^{\mathcal{O}} \rightarrow \mathbb{R}$. $\mathcal{C}$ is a finite set of constraints, and $u_c$ is a collective utility function to be defined later.
\end{definition}

\begin{definition}[Allocation of indivisible resources]
Given a resource allocation problem setting $\langle A, \mathcal{O}, U, \mathcal{C}, u_c \rangle$, an allocation is a mapping $a: A \rightarrow 2^{\mathcal{O}}$. 
\end{definition}

\begin{definition}[Admissability of an allocation]
Given a resource allocation setting $\langle A, \mathcal{O}, U, \mathcal{C}, u_c \rangle$, an allocation $a$ is admissable if it satisfies  all constraints in $\mathcal{C}$.
\end{definition}

For the specific case of the resource allocation problem that we are interested in, there is only one constraint in $\mathcal{C}$, namely the {\it preemption constraint}. Also, we restrict ourselves to a special case of $\max$-utility functions. The definitions of these concepts are as follows.

\begin{definition}[Preemption constraint]
Given a resource allocation setting $\langle A, \mathcal{O}, U, \mathcal{C}, u_c \rangle$ and an allocation $a$, then $a$ satisfies the preemption constraint $c_{\mathsf{preempt}}$ iff $\forall i \in A : \forall j \in A : (j \not= i) \rightarrow (a(i) \cap a(j) = \varnothing)$. We write $a \vDash c_{\mathsf{preempt}}$.
\end{definition}

In words, the preemption constraint requires that an item is allocated to no more than one agent.

\begin{definition}[$\max$-utility function]\label{def:maxutil}
In a resource allocation setting $\langle A, \mathcal{O}, U, \mathcal{C}, u_c \rangle$, a utility function $u \in U$ is a $\max$-utility function if $u(\mathcal{O'} \in 2^{\mathcal{O}}) = \max\{d_u(o) | o \subseteq \mathcal{O'} \}$, where $d_u: 2^{\mathcal{O}} \rightarrow \mathbb{R}$.
\end{definition}

In words, a $\max$-utility function has an associated demand function $d$. The $\max$-utility of a set of resources $\mathcal{O'}$ is the subset of $\mathcal{O'}$ for which the demand is the highest. We are interested in the following special case of $\max$-utility functions

\begin{definition}[$\max$-utility function with atomic demands]
$u$ is a $\max$-utility function with atomic demands if $u$ is a $\max$-utility function as defined in definition \ref{def:maxutil}, and $d_u$ has an associated atomic demand set $D_u = \{r_i, \ldots, r_m\} \subset \mathbb{R}$ such that

\begin{equation*}
d_u(\mathcal{O'} \in 2^{\mathcal{O}}) = 
\begin{cases} 
r_i & \text{ if } \mathcal{O'} = \{o_i\} \text{ for } 1 \leq i \leq m \\ 0 & \text{ otherwise } 
\end{cases}.
\end{equation*}
\end{definition}

This means: agents only express demands for single resources. Their utility for a set of resources is the highest demand they have for each of the individual resources of that set. Note that a $\max$-utility function is completely represented by its associated atomic demand set.

Now we are ready to discuss the {\it collective utility} function mentioned in definition \ref{def:allocationproblem}. The purpose of the collective utility function $u_c$ is to express the quality of an allocation. For this we need to be able to compare the answers that $u_c$ gives for any two different allocations. This implies:

\begin{itemize}
\item $u_c: (A \rightarrow 2^{\mathcal{O}}) \rightarrow X$,
\item we need to specify $X$,
\item we need to define a transitive comparison relation $\prec_X$ over $X$.
\end{itemize}

In a lot of cases we can say for example $X = \mathbb{R}$ or $X = \mathbb{N}$. The comparison relation is then simply $\leq$.
This is the case for classical utilitarian collective utility functions or egalitarian collective utility functions \cite{mara-survey}.
For us, the relation is a bit more complex. We are concerned with {\it leximin-egalitarian} collective utility functions.

\begin{definition}[Leximin-egalitarian collective utility]
Given a resource allocation setting $\langle A, \mathcal{O}, U, \mathcal{C}, u_c \rangle$. $u_c: (A \rightarrow 2^{\mathcal{O}}) \rightarrow X$ is a leximin-egalitarian collective utility function iff $X = \mathbb{R}^n$ and for all allocations $a$: $u_c(al) = \vec{x}$, where 
\begin{equation*}
\vec{x} = \begin{bmatrix}u_1(al(1)) \\ \vdots \\ u_n(al(n)) \end{bmatrix}.
\end{equation*}
\end{definition}

\begin{definition}[Leximin-egalitarian comparison relation]
The leximin-egalitarian comparison relation $\prec_{\mathsf{leximin}}$ is defined as follows: Let $\vec{u} \in \mathbb{R}^n$ and $\vec{v} \in \mathbb{R}^n$ and let $\vec{u^{\uparrow}}$ and $\vec{v^{\uparrow}}$ be the sorted versions of $\vec{u}$ and $\vec{v}$ respectively. Now, it holds that
\begin{equation*}
\vec{v} \prec_{\mathsf{leximin}} \vec{u} \Leftrightarrow \exists i : \forall j < i: \vec{v_j^{\uparrow}} = \vec{u_j^{\uparrow}} \wedge \vec{v_i^{\uparrow}} < \vec{u_i^{\uparrow}}.
\end{equation*}
\end{definition}

\begin{definition}[Leximin-maximality]
Given a resource allocation setting $\langle A, \mathcal{O}, U, \mathcal{C}, u_c \rangle$, with $u_c$ being a leximin-egalitarian collective utility function. An admissable allocation $a$ is leximin-maximal if there exists no admissable allocation $a'$ such that $u_c(a) \prec_{\mathsf{leximin}} u_c(a')$.
\end{definition}

A leximin-maximal allocation has a desirable `fairness'-property to it: The most important priority in a leximin-maximal allocation, is that the lowest utility among all the agents is as high as possible. As a second most important priority, the second-lowest utility among all the agents is made as high as possible, etcetera.

Finally we are ready to state the problem that we will prove to be in $\mathsf{P}$.

\begin{definition}[LMMUAB-ALLOCATION (i.e. Leximin-maximal $\max$-utility atomic bids resource allocation)]
A problem instance of LMMUAB-ALLOCATION is a resource allocation problem setting $\langle A, \mathcal{O}, U, \mathcal{C}, u_c \rangle$ and a vector $K$, where
\begin{itemize}
\item $u_c$ is a leximin-egalitarian collective utility function, 
\item $\mathcal{C} = \{c_{\mathsf{preempt}}\}$,
\item $\forall u \in U : u$ is a $\max$-utility function with atomic demands.
\item $K \in \mathbb{R}^n$
\end{itemize}

It is sufficient to represent a LMMUAB-ALLOCATION-instance as the triple $\langle A, \mathcal{O}, D \rangle$, where $D = \{D_1, \ldots, D_n\}$ is a set of atomic demand sets, and for $1 \leq i \leq n$, $D_i$ is the atomic demand set associated with $a_i$ and $u_i$.

The task is to determine if there exists an admissable allocation $a$ such that 
\begin{equation*}
K \prec_{\mathsf{leximin}} u_c(a).
\end{equation*}

\end{definition}

We prove LMMUAB-ALLOCATION in $\mathsf{P}$ by giving a polynomial time algorithm for its optimization variant.

\begin{definition}[LMMUAB-ALLOCATION-OPT (i.e. Leximin-maximal $\max$-utility atomic bids resource allocation, optimization variant)]
A problem instance of LMMUAB-ALLOCATION-OPT is the same as a problem instance of LMMUAB-ALLOCATION, but without the vector $K$. The task is to find a leximin-maximal, admissable allocation.
\end{definition}

\subsection{A polynomial time algorithm for LMMUAB-ALLOCATION-OPT}

Consider the following algorithm for LMMUAB-ALLOCATION-OPT:

\begin{center}
\begin{tabular}{|rl|} \hline
 & \bf{Algorithm A:} \\
{\bf Input: }& $I$, an instance of LMMUAB-ALLOCATION-OPT. \\
 & That is, $I = \langle A = \{a_1, \ldots, a_n\}, \mathcal{O} = \{o_1, \ldots, o_m\}, D = \{D_1, \ldots, D_n\} \rangle $, \\
 & and for $1 \leq i \leq n, D_i = \{r_{i,1}, \ldots, r_{i,m}\}$.\\
{\bf Output: }& $a$, a leximin-maximal allocation for $I$.\\
{\bf Begin} & \\
{\bf 1.} & Create a complete weighted bipartite graph $G = (V = (L \cup R), E)$, \\
 & where $L$ and $R$ are the left and right parts of the graph respectively. \\
 & We set $L := \mathcal{O}, R := A$.\\
{\bf 2.} & Generate weights $\ell_{i,j}$ for all $\{a_i, o_j\} \in E$ such that \\
         & $\ell_{i,j} \geq \sum_{ \{ (i',j') | r_{i',j'} > r_{i,j} \} } \ell_{i',j'}$. \\
{\bf 3.} & Find with the Hungarian algorithm \cite{citeulike:1274573} a minimum weighted bipartite \\
 & matching $M$ on $G$, using the weights computed in step 2.\\
{\bf 4.} & For all ${i,j} \in M$, set $a(a_i) := \{o_j\}$.\\
{\bf End} & \\
\hline
\end{tabular}
\end{center}
First please note: a minimum weighted bipartite matching is a maximum matching in a weighted bipartite graph such that the {\it cumulative weight} of the matching (i.e. the sum of the weights of the edges in the matching) is minimal. See for example \cite{citeulike:472316}.

We will now prove that this algorithm is correct and runs in polynomial time. From these two facts it follows that the decision variant of this problem also runs in polynomial time and hence is in $\mathsf{P}$

\begin{theorem}\label{optimality}
Algorithm A is a correct algorithm for LMMUAB-ALLOCATION-OPT, i.e. the allocation that algorithm A outputs on an LMMUAB-ALLOCATION-OPT-instance as input, is leximin-maximal.
\end{theorem}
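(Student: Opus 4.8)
The plan is to prove that the matching returned by Algorithm A induces a leximin-maximal utility vector, in three steps: (i) reduce the search over admissible allocations to a search over matchings in $G$; (ii) show that the contrived weights of step 2 make the cumulative weight of a matching a faithful encoding of the leximin order, so that a \emph{minimum}-weight matching is a leximin-\emph{maximal} one; and (iii) combine the two.

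For step (i) I would use that every $u_i$ is a $\max$-utility function with atomic demands, so for any bundle $B$ we have $u_i(B) = \max(\{0\} \cup \{ r_{i,j} : o_j \in B\})$; only the single most-demanded item in $B$ contributes. Hence, given any admissible allocation, keeping for each agent only one maximizing item yields an admissible allocation with the identical utility vector in which every agent holds at most one item and no item is shared, i.e. exactly a matching in $G$ (and conversely every matching of $G$ induces such an allocation). Therefore a leximin-maximal matching induces a leximin-maximal allocation, and it suffices to optimize over matchings. I would additionally justify restricting to \emph{maximum}-cardinality matchings, which is what step 3 computes: an unmatched agent contributes utility $0$, the least possible value, so that leaving agents unmatched can only populate the bottom of the sorted utility vector, and I will argue that for every matching there is a maximum matching whose sorted utility vector is leximin at least as large.

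For step (ii), which is the crux, let $v_1 > v_2 > \cdots > v_k$ be the distinct values occurring among the demands $r_{i,j}$ together with $0$, and for a matching $M$ let $n_t(M)$ be the number of agents whose utility equals $v_t$. The key property of the weights is that, by construction, any single edge weight dominates the sum of all edge weights of strictly larger demand value; reading the values from smallest to largest, this is precisely a super-increasing positional number system in which the edges of value $v_k$ are the most significant. I would make this quantitative by showing, via the domination inequality $\ell_{i,j} \geq \sum_{r_{i',j'} > r_{i,j}} \ell_{i',j'}$, that for two maximum matchings $M, M'$ the cumulative weight satisfies $w(M) < w(M')$ if and only if the count vector $(n_k(M), n_{k-1}(M), \ldots, n_1(M))$ is lexicographically smaller than $(n_k(M'), \ldots, n_1(M'))$. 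Concretely, at the least value $v_s$ at which the counts differ, all smaller values having equal counts and hence cancelling, the level-$s$ edges dominate the combined weight of all strictly-higher-value edges, which forces the sign of $w(M) - w(M')$ to agree with that of $n_s(M) - n_s(M')$. The one delicate point here is \emph{strictness}: the stated inequality is non-strict, so I would verify (or strengthen the weight generation so) that equality of weights cannot make two genuinely different count vectors tie, and this is where I expect to spend the most care.

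Finally, in step (iii) I would observe that the sorted ascending utility vector of $M$ is exactly $v_k$ repeated $n_k(M)$ times, then $v_{k-1}$ repeated $n_{k-1}(M)$ times, and so on; comparing two such vectors under $\prec_{\mathsf{leximin}}$ reduces, at the first value level from the bottom where the counts differ, to preferring the matching with \emph{fewer} agents at that low value. This is exactly the lexicographic comparison of the count vectors $(n_k, \ldots, n_1)$ from step (ii). Hence $w(M) < w(M')$ iff $u_c(M') \prec_{\mathsf{leximin}} u_c(M)$ (identifying a matching with the allocation it induces), so the minimum-weight matching found by the Hungarian algorithm is leximin-maximal among maximum matchings; together with the reduction of step (i) it is leximin-maximal among all matchings and hence a leximin-maximal allocation, proving the theorem. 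The main obstacles I anticipate are the maximum-cardinality reduction in step (i) — since a non-maximum matching need not extend to a maximum one by merely adding edges, an exchange argument is needed — and securing strict domination in step (ii).
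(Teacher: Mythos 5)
Your proposal is correct and rests on the same central insight as the paper's proof---the super-increasing weights of step 2 turn minimum cumulative weight into leximin-maximality---but your decomposition is genuinely different and, in two places, more careful. The paper argues by contradiction with an induction over positions of the sorted utility vector: it compares, for each $i$, the $i$-th highest-weight edge of the algorithm's matching $M$ with that of a leximin-optimal matching $M_{\mathsf{OPT}}$, and invokes the domination property to conclude that the two weights must coincide. You instead prove a clean biconditional: for maximum matchings, $w(M) < w(M')$ iff the vector of per-value agent counts, read from the lowest value upward, is lexicographically smaller, which is then shown to be exactly the leximin comparison of sorted utility vectors. This buys you two things the paper glosses over. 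First, you explicitly justify restricting to maximum-cardinality matchings, which is what the Hungarian algorithm actually optimizes over; the paper never addresses this. (Your worry that an exchange argument is needed here is overcautious: $G$ is complete, so a non-maximum matching always has an unmatched agent and an unmatched item, and adding that edge weakly improves the utility vector pointwise; iterating reaches a maximum matching.) Second, your ``delicate point'' about strictness is a genuine gap in the paper, not merely in your write-up: the constraint of step 2 is non-strict and is satisfied, for instance, by all-zero weights, under which the theorem is false; moreover nothing in step 2 forces edges with equal demands to receive equal weights, a property both your biconditional and the paper's claim that $r_{i,j} < r_{i',j'} \Leftrightarrow \ell_{i,j} > \ell_{i',j'}$ holds ``by construction'' silently require. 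Your plan to strengthen the weight generation to strict domination with equal weights at equal demand values is exactly what is needed to make either proof sound.
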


\begin{proof}
First note that there exists a leximin-maximal allocation in which every agent gets at most one resource. This is due to the combination of $\max$-utility functions with atomic demands: of a bundle allocated to an agent, only a single resource in that bundle decides the agent's utility of that bundle, so we could just as well remove all the other items from the bundle. 

Step 4 allocates an item to an agent if the corresponding edge is in $M$. Because $M$ is a minimum weighted matching, an agent is allocated at most 1 item. What remains is proving that if our algorithm has found a minimum weighted matching $M$, then the algorithm constructs a leximin-maximal $a$. Suppose that is not the case: call the leximin-maximal allocation $a_{\mathsf{OPT}}$, and assume our algorithm returns an $a$ such that $u_c(a) \prec_{\mathsf{leximin}} u_c(a_{\mathsf{OPT}})$. By the definition of the leximin order $\prec_{\mathsf{leximin}}$ this means that

\begin{equation*}
\exists i : \forall j < i: u_c(a)_j^{\uparrow} = u_c(a_{\mathsf{OPT}})_j^{\uparrow} \wedge u_c(a)_i^{\uparrow} < u_c(a_{\mathsf{OPT}})_i^{\uparrow}.
\end{equation*}

We will now prove that there exists not such an $i$, resulting in a contradiction. We prove by induction that for all $1 \leq i \leq n: u_c(a)_i^{\uparrow} = u_c(a_{\mathsf{OPT}})_i^{\uparrow}$. For the remainder of the proof, let $M_{\mathsf{OPT}}$ be the matching that corresponds to $a_{\mathsf{OPT}}$, in the same way as $M$ corresponds to $a$.

\paragraph{Base case} $u_c(a)_1^{\uparrow} = u_c(a_{\mathsf{OPT}})_1^{\uparrow}$.
First of all, by construction of the weights in step 3, for all $1 \leq i \leq n, 1 \leq i' \leq n, 1 \leq j \leq m, 1 \leq j' \leq m : r_{i, j} < r_{i',j'} \Leftrightarrow \ell_{i,j} > \ell_{i',j'}$. So the edge with highest weight in $M$ corresponds to the agent with the lowest utility of the allocation, hence this utility corresponds to $u_c(a)_1^{\uparrow}$.
Secondly, let $e$ and $e_{\mathsf{OPT}}$ be the edges with the highest weight that are in $M$ and $M_{\mathsf{OPT}}$ respectively. Now, consider the set of edges $E_{>}$ with weights that are strictly greater than the weight of $e_{\mathsf{OPT}}$. By construction of the weights, it follows that any matching in which an $e' \in E_{>}$ is included, always has a greater cumulative weight than a matching in which $e_{\mathsf{OPT}}$ is included as the edge with the highest weight. Step 4 of the algorithm returns the matching with minimum cumulative weight, so the weight of $e$ must be the weight of $e_{\mathsf{OPT}}$.

\paragraph{Induction hypothesis} $\forall j < i: u_c(a)_j^{\uparrow} = u_c(a_{\mathsf{OPT}})_j^{\uparrow}$.

\paragraph{Induction step} $u_c(a)_i^{\uparrow} = u_c(a_{\mathsf{OPT}})_i^{\uparrow}$.
This follows more or less trivially from the same arguments as given for the base case: let $e^i$ and $e_{\mathsf{OPT}}^i$ be the edges with the $i$'th highest weight that are in $M$ and $M_{\mathsf{OPT}}$ respectively. Now, consider the set of $i$'th highest edges $E_{>}^i$ with weights that are strictly greater than the weight of $e_{\mathsf{OPT}}^i$ and strictly less than the weight of edge $e_{\mathsf{OPT}}^{i-j}, 1 \leq j \leq n-1$. By construction of the weights, it follows that any matching in which an $e' \in E_{>}^i$ is included as an $i$'th highest edge, always has a greater cumulative weight than a matching in which $e_{\mathsf{OPT}}^i$ is included as an $i$'th highest edge. Step 3 of the algorithm returns the matching with minimum cumulative weight, so the weight of $e^i$ must be the weight of $e_{\mathsf{OPT}}^i$.

\end{proof}

\begin{theorem}\label{polynomiality}
Algorithm A runs in polynomial time.
\end{theorem}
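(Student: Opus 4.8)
The plan is to bound the running time of each of the four steps of Algorithm~A separately and observe that the sum is polynomial in the size of the input $I = \langle A, \mathcal{O}, D \rangle$, whose encoding has size on the order of $mn$ (there are $n$ agents, $m$ resources, and at most $m$ demand values per agent). Throughout I would assume the standard convention that a comparison of two input demands $r_{i,j}$ costs time polynomial in their encoding length.

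First I would dispose of the routine steps. Step~1 builds a complete bipartite graph on $m+n$ vertices with $mn$ edges, which takes $O(mn)$ time. Step~3 invokes the Hungarian algorithm, which computes a minimum weighted bipartite matching in $O((m+n)^3)$ arithmetic operations (see \cite{citeulike:1274573}); this is polynomial in the number of vertices. Step~4 merely reads off the matching and costs $O(\min\{m,n\})$. Hence the only parts demanding care are step~2 and the true bit-cost of the arithmetic carried out inside step~3.

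The crux is step~2, and the main obstacle is not the time needed to \emph{compute} the weights but the need to show that the weights produced have polynomially bounded \emph{encoding size}. The defining inequality $\ell_{i,j} \ge \sum_{\{(i',j') \mid r_{i',j'} > r_{i,j}\}} \ell_{i',j'}$ forces each weight to be at least the total weight of all edges of strictly larger demand; since, by the proof of Theorem~\ref{optimality}, larger demand corresponds to smaller weight, the weights grow geometrically as the demand decreases, and one must verify they do not need super-polynomially many bits. I would make a concrete assignment: collect the at most $mn$ demand values occurring in $D$, sort them at cost $O(mn\log(mn))$, group equal values, and then process the distinct values from largest to smallest while maintaining the running total $T$ of all weights assigned so far, giving each group of $c$ edges of the current demand the common integer weight $T+1$. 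This clearly satisfies the inequality, and a short induction shows that processing a group of $c$ edges multiplies $T+1$ by a factor of at most $1+c$; since the multiplicities sum to the edge count $mn$ and $\prod (1+c) \le 2^{\sum c} = 2^{mn}$, every assigned weight is bounded by $2^{mn}$ and so has $O(mn)$ bits. This is precisely the point where the argument could fail, so I would stress that although the weight \emph{values} are exponentially large, their binary \emph{representations} remain polynomial.

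Finally I would return to step~3 and confirm that replacing ``arithmetic operation'' by its genuine bit-cost preserves polynomiality. Because every weight fits in $O(mn)$ bits, each addition and comparison performed by the Hungarian algorithm costs $O(mn)$ time, so its $O((m+n)^3)$ operations take $O((m+n)^3 \cdot mn)$ time in total. Adding the costs of all four steps yields a polynomial bound on the running time of Algorithm~A, which completes the argument.
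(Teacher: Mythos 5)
Your proof is correct, and it takes a genuinely more careful route than the paper's. The paper uses the same step-by-step decomposition and the same recipe for step 2 (sort the $mn$ demands, then build the weights from the largest demand downward in $O(mn\log mn)$ time), but it disposes of the issue you identify as the crux by fiat: a footnote declares a RAM model in which elementary arithmetic operations take unit time, after which the magnitudes of the weights are irrelevant and the analysis reduces to counting operations, giving $O((m+n)\log(m+n) + (m+n)m^2n^2)$ overall. You instead work in the bit-cost model: you give a concrete weight assignment (each group of $c$ edges of equal demand receives weight $T+1$, where $T$ is the running total of weights assigned so far), check that it satisfies the defining inequality $\ell_{i,j} \ge \sum_{\{(i',j') \mid r_{i',j'} > r_{i,j}\}} \ell_{i',j'}$, and prove via $T'+1 = (T+1)(1+c)$ and $\prod(1+c) \le 2^{mn}$ that every weight fits in $O(mn)$ bits, so each arithmetic operation inside the Hungarian algorithm costs only $O(mn)$ time. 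This extra work is not pedantry: the weights really are exponentially large in \emph{value}, so the paper's unit-cost assumption is doing nontrivial work, and the corollary that LMMUAB-ALLOCATION is in $\mathsf{P}$ (a Turing-machine class) ultimately rests on exactly the observation you supply, namely that the weight \emph{encodings} are polynomially bounded. In short, the paper's argument is shorter but model-dependent; yours fully justifies the polynomial-time claim, at the price of having to exhibit an explicit weight construction rather than leaving step 2 semi-constructive.
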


\begin{proof} The complexities of the individual steps of the algorithm are\footnote{We assume a RAM-model where the elementary arithmetic operations take unit time.}:

\begin{itemize} 
\item In step 1, $m+n$ nodes and $mn$ edges are constructed. This takes $O(mn)$ time.
\item In step 2 $mn$ weights are computed. This step is not described in a very constructive way, but it can be easily seen that it can be done by first sorting the union of all the demand vectors, and subsequently constructing the weights from the highest to the lowest element in the sorted array. In this step, the sorting is the most intensive part and takes $O(mn \log mn)$ time.
\item In step 3 the Hungarian algorithm for minimum weighted bipartite matchings is ran. This algorithm needs a helper shortest-path algorithm. If we use Dijkstra's algorithm as a helper algorithm for the Hungarian algorithm, then this step can be done in $O((m+n) \log (m+n) + (m+n)(m^2n^2))$ time  \cite{citeulike:472316}.
\item Step 4 is clearly done in $O(m+n)$ time.
\end{itemize}

Adding up the complexities of these steps, we conclude that the algorithm can run in  $O((m+n) \log (m+n) + (m+n)(m^2n^2))$ time.
\end{proof}
\begin{corollary}[from theorems \ref{optimality} and \ref{polynomiality}]
LMMUAB-ALLOCATION is in $\mathsf{P}$.
\end{corollary}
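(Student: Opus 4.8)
The plan is to derive the decision problem LMMUAB-ALLOCATION from the optimization problem LMMUAB-ALLOCATION-OPT, for which Algorithm A is already a correct (Theorem \ref{optimality}) and polynomial-time (Theorem \ref{polynomiality}) solver. The guiding observation is that a leximin-maximal allocation is, by definition, a $\prec_{\mathsf{leximin}}$-greatest element among all admissable allocations. Hence the existential test ``is there an admissable $a$ with $K \prec_{\mathsf{leximin}} u_c(a)$?'' can be answered by comparing $K$ against the single optimal allocation that Algorithm A returns, rather than against an unknown witness.

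Concretely, I would first run Algorithm A on the instance $\langle A, \mathcal{O}, D \rangle$ to obtain a leximin-maximal admissable allocation $a^*$; such an allocation exists because the set of admissable allocations is finite and $\prec_{\mathsf{leximin}}$ is a total preorder on sorted vectors, and Theorem \ref{optimality} guarantees the output is one of them. Next I would compute the collective-utility vector $u_c(a^*) \in \mathbb{R}^n$. Since $a^*$ assigns at most one resource to each agent (as established in the proof of Theorem \ref{optimality}) and the utilities are $\max$-utilities with atomic demands, each coordinate $u_i(a^*(i))$ is simply $a_i$'s demand for its single allocated item, or $0$ if it receives nothing; the whole vector is therefore read off in $O(n)$ time. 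Finally I would sort $K$ and $u_c(a^*)$ and check the strict leximin condition coordinate-by-coordinate in $O(n \log n)$ time, accepting the instance iff $K \prec_{\mathsf{leximin}} u_c(a^*)$.

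The correctness argument is the real content. I would prove the equivalence: there is an admissable $a$ with $K \prec_{\mathsf{leximin}} u_c(a)$ if and only if $K \prec_{\mathsf{leximin}} u_c(a^*)$. The backward direction is immediate by taking $a = a^*$. For the forward direction, suppose some admissable $a$ satisfies $K \prec_{\mathsf{leximin}} u_c(a)$. Because $a^*$ is leximin-maximal, no admissable allocation strictly exceeds it, so $u_c(a) \preceq_{\mathsf{leximin}} u_c(a^*)$, meaning the sorted vectors either agree or satisfy $u_c(a) \prec_{\mathsf{leximin}} u_c(a^*)$. Transitivity of the leximin relation then chains $K \prec_{\mathsf{leximin}} u_c(a)$ with $u_c(a) \preceq_{\mathsf{leximin}} u_c(a^*)$ to give $K \prec_{\mathsf{leximin}} u_c(a^*)$.

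The one subtlety I would be careful about — and the only place the argument could slip — is this interaction between the strict relation $\prec_{\mathsf{leximin}}$ and the weak ``not strictly worse'' relation implicit in the definition of leximin-maximality. I would make explicit that $\prec_{\mathsf{leximin}}$ is a transitive relation induced by a total preorder on the sorted vectors, so that combining a strict step with a weak step is legitimate and yields a strict conclusion. Everything else is routine bookkeeping: the above procedure runs in polynomial time by Theorems \ref{optimality} and \ref{polynomiality} together with the $O(n \log n)$ comparison, which is exactly membership of LMMUAB-ALLOCATION in $\mathsf{P}$.
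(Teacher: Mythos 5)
Your proposal is correct and follows essentially the same route as the paper, which treats the corollary as an immediate consequence of Theorems \ref{optimality} and \ref{polynomiality} (noting only, in a footnote, that a decision algorithm is easily obtained from the optimization algorithm). You have simply made explicit the routine reduction --- run Algorithm A, compare $K$ against $u_c(a^*)$ using transitivity of the leximin order --- that the paper leaves implicit.
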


\section{Complexity of deciding whether an allocation is pareto optimal for agents with additive utility}
\label{section:paretooptimal}

In this section we prove that deciding whether an allocation of resources among a set of agents is $\mathsf{coNP}$-complete if the agents have additive utility functions. We will make use of the definitions given in section \ref{preliminaries}.
As said in the introduction of this paper, $\mathsf{coNP}$-completeness has already been proved for the case where agents have $k$-additive utility functions and $k \geq 2$. 

\begin{definition}[$k$-additive utility]
In a resource allocation setting $\langle A, \mathcal{O}, U, \mathcal{C}, u_c \rangle$, a utility function $u_i$ of an agent $a_i$ is $k$-additive if for each set $T \subseteq \mathcal{O}$ with $|T| = k$ there exists a coefficient $\alpha_T$ and for all $R \subseteq \mathcal{O}$ it holds that 
\begin{equation*}
u_i(R) = \sum_{T \subseteq \mathcal{R}} \alpha_T.
\end{equation*}
\end{definition}

$k$-additive utility functions are a generalisation of {\it additive} utility functions.

\begin{definition}[additive utility]\label{definition:additiveutility}
An additive utility function is a $k$-additive utility function with $k = 1$, i.e. a 1-additive utility function. An additive utility function can be represented as a set of coefficients: one coefficient for each item in $\mathcal{O}$.
\end{definition}

Next, we define the notion of {\it Pareto-efficiency}.

\begin{definition}[Pareto-efficiency]
In a resource allocation setting $\langle A, \mathcal{O}, U, \mathcal{C}, u_c \rangle$, an admissable allocation $a$ is Pareto-efficient (also called: Pareto-optimal, or simply efficient) if there exists not a different admissable allocation $a'$ where the utility of at least one agent is higher than in allocation $a$, and the utilities of all other agents are not lower than in allocation $a$. More formal: allocation $a$ is Pareto-optimal if there exists no allocation $a'$ such that 
\begin{equation*}
\exists a_i \in A : u_i(a'(a_i)) > u_i(a(a_i)) \wedge (\forall a_j \in A : u_j(a'(a_j)) \geq u_j(a(a_j))).
\end{equation*}
If such an allocation $a'$ does exist, then $a$ is not Pareto-optimal and we say that $a'$ {\it Pareto-dominates} $a$.
Also we say that $a$ can be {\it Pareto-improved} to $a'$ if $a'$ is an allocation that Pareto-dominates $a$. The process of reallocating items to get from $a$ to $a'$ is called a {\it Pareto-improvement}. If for $a$ there is no Pareto-improvent possible, then clearly $a$ is Pareto-optimal.
\end{definition}

Now we state the problem and prove it $\mathsf{coNP}$-complete.

\begin{definition}[PO-ALLOCATION-ADDITIVE (i.e. Pareto-Optimal Allocation with Additive utility functions)]
A problem instance of PO-ALLOCATION-ADDITIVE is a resource allocation problem setting $\langle A, \mathcal{O}, U, \mathcal{C}, u_c \rangle$ and an associated admissable allocation $a : A \rightarrow 2^{\mathcal{O}}$, where
\begin{itemize}
\item $\mathcal{C} = \{c_{\mathsf{preempt}}\}$,
\item $\forall u \in U : u$ is an additive utility function.
\end{itemize}

The problem is to decide whether $a$ is Pareto-optimal.
The collective utility function $u_c$ can be disregarded here, so the problem is representable as the 4-tuple $\langle A, \mathcal{O}, V, a\rangle$. In this 4-tuple, $V = \{v_1, \ldots, v_n\}$ represents the utility functions of $U$. For all $1 \leq i \leq n$, $v_i$ is the representation of $u_i$ as described in definition \ref{definition:additiveutility}.
\end{definition}

\begin{theorem}\label{additive}
PO-ALLOCATION-ADDITIVE is $\mathsf{coNP}$-complete.
\end{theorem}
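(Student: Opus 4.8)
My plan is to prove the two directions separately: membership in $\mathsf{coNP}$ and $\mathsf{coNP}$-hardness. Membership is the easy half. The complement problem --- ``is $a$ \emph{not} Pareto-optimal?'' --- is in $\mathsf{NP}$, because a Pareto-dominating allocation $a'$ is a short certificate: given $a'$ we can check in polynomial time that it satisfies $c_{\mathsf{preempt}}$ (disjoint bundles), compute every agent's utility $v_i(a'(a_i))$ as a sum of coefficients (additivity makes this polynomial), and verify that $v_i(a'(a_i)) \geq v_i(a(a_i))$ for all $i$ with strict inequality for at least one $i$. Hence the complement is in $\mathsf{NP}$ and PO-ALLOCATION-ADDITIVE is in $\mathsf{coNP}$.

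For hardness I would reduce an $\mathsf{NP}$-complete problem, say $\mathsf{3SAT}$, to the \emph{complement} of PO-ALLOCATION-ADDITIVE; equivalently I construct from a formula $\phi$ an instance $\langle A, \mathcal{O}, V, a\rangle$ such that $a$ is Pareto-optimal iff $\phi$ is unsatisfiable. The construction uses three kinds of gadgets. For each variable $x_i$ a \emph{variable gadget} offers exactly two utility-preserving ways to reassign its items between a pair of agents, corresponding to setting $x_i$ true or false, and is designed so that no ``mixed'' reassignment is weakly improving for all involved agents. For each clause $C_j$ a \emph{clause-checker} agent $H_j$ holds an item whose value to $H_j$ can only be matched (so that $H_j$ is not made worse off) by items released from the variable gadgets when at least one of $C_j$'s literals is set to the satisfying value. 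Finally a single distinguished agent $g$ holds a bundle that can be strictly increased only by acquiring a high-value item currently parked on the clause-checkers, and that item can migrate to $g$ without harming anyone precisely when every $H_j$ has been compensated, i.e. when all clauses are simultaneously satisfied by a globally consistent assignment. Thus a Pareto-improvement over $a$ exists iff $\phi$ is satisfiable.

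The main obstacle is the arithmetic tuning of the additive coefficients so that the \emph{only} Pareto-improvements are the intended ones. I would separate the gadgets into independent value ``scales'' --- using coefficients that are powers of a base large enough that no combination of lower-scale items can substitute for a higher-scale item --- so that (i) variable choices cannot be made inconsistently (no agent profits from taking both $x_i$ and $\bar x_i$ worth of items), (ii) a clause-checker $H_j$ is kept at its current utility exactly when one of its literals is satisfied, and (iii) the distinguished agent $g$ can be strictly improved only when all checkers are compensated at once. Ruling out \emph{spurious} weakly-improving reallocations that bypass this logic (partial trades, cross-gadget cycles, or reallocations that lower one agent to raise another) is the delicate part, and is where the choice of weights and the separation of scales must be argued carefully. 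Once the gadget is shown to admit an improving reallocation iff $\phi$ is satisfiable, the reduction is clearly polynomial-time computable, and together with the $\mathsf{coNP}$ membership this establishes $\mathsf{coNP}$-completeness.
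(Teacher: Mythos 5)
Your $\mathsf{coNP}$-membership argument is correct and is exactly the paper's: a Pareto-dominating allocation is a polynomial-time-verifiable certificate for the complement problem, since with additive utilities all bundle values are sums of coefficients. Your hardness plan also has the same overall architecture as the paper's reduction (the paper reduces from 3-UNSAT, which is the same as reducing 3SAT to the complement, as you propose): variable gadgets forcing a consistent truth assignment, clause agents that can only be kept whole by literal items released when one of their literals is made true, and a distinguished agent whose strict gain is gated behind satisfaction of all clauses.

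The proposal, however, stops exactly where the proof lies, and it omits one structural device the construction needs. First, the aggregation mechanism is missing: you say the high-value item is ``parked on the clause-checkers'' and ``can migrate to $g$ without harming anyone precisely when every $H_j$ has been compensated,'' but a single item is held by a single agent, and its migration directly harms only that holder; nothing in your gadget list makes that harm depend on \emph{all} clauses simultaneously. The paper inserts a dedicated agent $a_{\mathsf{satisfied}}$ for this purpose: it holds the prize item, values it at $w'$ (the number of clauses), and values each clause item $o_{c_i}$ at $1$, so it can release the prize only upon receiving all $w'$ clause items, each of which can leave its clause agent $a_{c_i}$ only if that agent receives a literal item in exchange --- this chain is what converts ``a Pareto-improvement exists'' into ``all clauses are simultaneously satisfiable by a consistent assignment.'' Second, the construction is never instantiated and neither direction of the equivalence is argued; in particular the direction ``if $\phi$ is unsatisfiable then $a$ is Pareto-optimal,'' which is the crux and occupies a separate lemma with a case analysis over which agent could be the strict gainer, is deferred to ``the choice of weights must be argued carefully.'' Note also that your proposed remedy of separating value scales by large powers of a base is not what makes this work: the paper uses small, exactly-tuned coefficients (agent $a_{\mathsf{set}(l)}$ values the variable item $o_{x_i}$ at exactly the number of occurrences of $l$, so it can trade away all its literal items at no net loss; $a_{\mathsf{unassigned}}$ holds the $w$ variable items worth $1$ each and can be compensated only by the prize, worth $w+1$ to him). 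Separation of scales by itself does not exclude the partial or cross-gadget reallocations you worry about --- that exclusion is precisely the case analysis your proposal leaves undone.
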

\begin{proof}
Showing membership of $\mathsf{coNP}$ is easy: If the allocation $a$ of a PO-ALLOCATION-ADDITIVE-instance is not Pareto-optimal, then a certificate would be an allocation that Pareto-dominates $a$.

Proving $\mathsf{coNP}$-hardness for this problem is very difficult. We do it by a Karp reduction from 3-UNSAT. 3-UNSAT is the problem of deciding whether a propositional formula in 3CNF is unsatisfiable. Because satisfiable instances of such a formula are easy to verify, the complement of 3-UNSAT is in $\mathsf{NP}$. Hence 3-UNSAT is in $\mathsf{coNP}$.

The reduction is as follows. We are given an instance of 3-UNSAT $I$ with variables $\{x_1, \ldots, x_w\}$ and clauses $\{c_1, \ldots, c_{w'}\}$. A clause is given as a set of at most 3 literals. We transform this instance to a PO-ALLOCATION-ADDITIVE instance $I'$ in the following way. As in the definition, $I'$ is represented as the 4-tuple $\langle A, \mathcal{O}, V, a \rangle$.

\begin{itemize}
\item In $I'$, $|A| = 2w + w' + 2$: For each variable $x_i$ in $I$, two agents are introduced: $a_{\mathsf{set}(x_i)}$ and $a_{\mathsf{set}(\neg x_i)}$. $a_{\mathsf{set}(x_i)}$ represents the set of clauses in which the literal $x_i$ occurs. $a_{\mathsf{set}(\neg x_i)}$ represents the set of clauses in which the literal $\neg x_i$ occurs. For each clause $c_i$ in $I$, one agent $a_{c_i}$ is introduced in $I'$. Lastly, 2 additional agents are introduced: $a_{\mathsf{unassigned}}$ and $a_{\mathsf{satisfied}}$.
\item In $I'$, $|\mathcal{O}| = w + w' + L + 1$, where $L$ is the total number of literals in the formula. For each clause $c_i$ we introduce for each literal $l$ in that clause the resource $o_{c_i, l}$. For each variable $x_i$ we introduce the resource $o_{x_i}$. For each clause $c_i$ we introduce the resource $o_{c_i}$. Lastly, the resource $o_{\mathsf{satisfied}}$ is added.
\item The additive utility functions $V$ of the agents are specified as follows. Remember that we use the following names:
\begin{eqnarray*}
V & =    & \{v_{\mathsf{set}(x_1)}, \ldots, v_{\mathsf{set}(x_w)}\} \\
  & \cup & \{v_{\mathsf{set}(\neg x_1)}, \ldots, v_{\mathsf{set}(\neg x_w)}\} \\
  & \cup & \{v_{c_1}, \ldots, v_{c_w'}\} \\
  & \cup & \{v_{\mathsf{unassigned}}, v_{\mathsf{satisfied}}\}.
\end{eqnarray*}

All $v \in V$ are vectors of coefficients. We name these coefficients as follows. Let $a_i \in A$, and let $o_j \in \mathcal{O}$. Thus, $i$ and $j$ stand not for numbers in this case, but for subscripts. Then the coefficient for resource $j$ in the additive utility function of agent $i$ goes by the name of $\alpha_{i,j}$ (and hence $\alpha_{i,j} \in v_i$).

The coefficients for all resources for all agents are set to zero, with the following exceptions:
\begin{itemize}
\item All coefficients in $\{\alpha_{\mathsf{unassigned}, x_1}, \ldots, \alpha_{\mathsf{unassigned}, x_w}\}$ are set to 1.
\item All coefficients in $\{\alpha_{\mathsf{satisfied}, c_1}, \ldots, \alpha_{\mathsf{satisfied}, c_{w'}}\}$ are set to 1.
\item All coefficients in $\{\alpha_{c_1, c_1}, \alpha_{c_2, c_2}, \ldots, \alpha_{c_{w'}, c_{w'}}\}$ are set to 1.
\item For all coefficients $\alpha_{\mathsf{set}(l), x_i}$ in 
\begin{eqnarray*} 
& & \{\alpha_{\mathsf{set}(x_1), x_1}, \alpha_{\mathsf{set}(x_2), x_2}, \ldots, \alpha_{\mathsf{set}(x_w), x_w}\} \\
& \cup & \{\alpha_{\mathsf{set}(\neg x_1), x_1}, \alpha_{\mathsf{set}(\neg x_2), x_2}, \ldots, \alpha_{\mathsf{set}(\neg x_w), x_w}\},
\end{eqnarray*}
$\alpha_{\mathsf{set}(l), x_i}$ is set to the number of times that $l$ occurs in the formula of $I$.
\item All coefficients in
\begin{equation*} 
\{\alpha_{\mathsf{set}(l), (c_i, l)} | 1 \leq i \leq w' \wedge l \in c_i \}
\end{equation*}
are set to 1.
\item
All coefficients in
\begin{equation*} 
\{\alpha_{c_i, (c_i, l)} | 1 \leq i \leq w' \wedge l \in c_i \}
\end{equation*}
are set to 1.
\item $\alpha_{\mathsf{satisfied}, \mathsf{satisfied}}$ is set to $w'$ and $\alpha_{\mathsf{satisfied}, \mathsf{unassigned}}$ is set to $w + 1$.
\end{itemize}
\item Lastly, we must specify the allocation $a$. 
\begin{itemize}
\item All resources $\{o_{x_1}, \ldots, o_{x_w}\}$ are allocated to $a_{\mathsf{unassigned}}$.
\item For all resources $o_{c_i}, 1 \leq i \leq w'$ we allocate $o_{c_i}$ to $a_{c_i}$.
\item All resources $o_{c_i, l}, 1 \leq i \leq w', l \in c_i$, are allocated to $a_{\mathsf{set}(l)}$.
\item The resource $o_{\mathsf{satisfied}}$ is allocated to agent $a_{\mathsf{satisfied}}$.
\end{itemize}
\end{itemize}

That completes the reduction. It can clearly be done in polynomial time. Before continuing with the correctness proof of this reduction, an example would be appropriate, due to the complexity of the reduction.

Consider the 3-UNSAT instance given by the formula 
\begin{equation*}
(x_1 \vee x_2 \vee \neg x_3) \wedge (\neg x_1 \vee \neg x_2 \vee \neg x_3).
\end{equation*}
We represent this instance as the tuple
\begin{eqnarray*}
\langle & \{x_1,x_2,x_3\}, & \\
        & \{c_1 = \{x_1,x_2,\neg x_3\}, c_2 = \{\neg x_1, \neg x_2, \neg x_3\} \}& \rangle
\end{eqnarray*}

Now if we run the reduction process on this instance, we get a PO-ALLOCATION-ADDITIVE instance that is displayed in the table below. The columns of the table represent the agents and the rows of the table represent the items. The entries in the table are the coefficients. An entry is displayed in italic if the item of the corresponding row is allocated to the agent of the corresponding column. Empty cells in the table should be regarded as zero entries.

\small{
\begin{center}
\begin{tabular}{r|r|r|r|r|r|r|r|r|r|r|}

           &    $a_{c_1}$ &    $a_{c_2}$ & $a_{\mathsf{set}(x_1)}$ & $a_{\mathsf{set}(\neg x_1)}$ & $a_{\mathsf{set}(x_2)}$ & $a_{\mathsf{set}(\neg x_2)}$ & $a_{\mathsf{set}(x_3)}$ & $a_{\mathsf{set}(\neg x_3)}$ & $a_{\mathsf{unassigned}}$ & $a_{\mathsf{satisfied}}$ \\
\hline
   $o_{x_1}$ &            &            &          1 &          1 &            &            &            &            &    {\it 1} &            \\
\hline
   $o_{x_2}$ &            &            &            &            &          1 &          1 &            &            &    {\it 1} &            \\
\hline
   $o_{x_3}$ &            &            &            &            &            &            &            &          2 &    {\it 1} &            \\
\hline
   $o_{c_1}$ &    {\it 1} &            &            &            &            &            &            &            &            &          1 \\
\hline
   $o_{c_2}$ &            &    {\it 1} &            &            &            &            &            &            &            &          1 \\
\hline
$o_{c_1,x_1}$ &          1 &            &    {\it 1} &            &            &            &            &            &            &            \\
\hline
$o_{c_1,x_2}$ &          1 &            &            &            &    {\it 1} &            &            &            &            &            \\
\hline
$o_{c_1,\neg x_3}$ &          1 &            &            &            &            &            &            &    {\it 1} &            &            \\
\hline
$o_{c_2,\neg x_1}$ &            &          1 &            &    {\it 1} &            &            &            &            &            &            \\
\hline
$o_{c_2,\neg x_2}$ &            &          1 &            &            &            &    {\it 1} &            &            &            &            \\
\hline
$o_{c_2,\neg x_3}$ &            &          1 &            &            &            &            &            &    {\it 1} &          &            \\
\hline
$o_{\mathsf{satisfied}}$ &            &            &            &            &            &            &            &            &          4 &    {\it 2} \\
\hline
\end{tabular}  
\end{center} 
}
\normalsize
Now we will continue with the correctness proof. We must show that there only exists a Pareto-dominating allocation if the formula of the 3-UNSAT instance is satisfiable. This follows from the following two lemmas and concludes the proof.

\begin{lemma}
If the 3-UNSAT instance $I$ is a NO-instance, i.e. the formula is satisfiable, then the allocation $a$ in $I'$ is not Pareto-optimal.
\end{lemma}
\begin{proof}
First have to explain the function of all agents and resources with respect to the 3-UNSAT instance $I$.
The allocations of resources $\{o_{x_1}, \ldots, o_{x_{w}}\}$ represent to which truth-value the variables are set. If $o_{x_i}$ is allocated to $a_{\mathsf{unassigned}}$, this means that $x_i$ is set to no truth-value. If $o_{x_i}$ is allocated to $a_{\mathsf{set}(x_i)}$, this means that $x_i$ is set to true, and the clauses in which the literal $x_i$ occurs are made true. If $o_{x_i}$ is allocated to $a_{\mathsf{set}(\neg x_i)}$, this means that $x_i$ is set to false, and clauses in which the literal $\neg x_i$ occurs are made true. The agents $\{a_{c_1}, \ldots, a_{c_{w'}}\}$ represent the clauses of the formula. If resource $o_{c_i} \in \{o_{c_1}, \ldots, o_{c_{w'}}\}$ is allocated to $a_{c_i}$, it means that clause $c_i$ is not satisfied. If resource $o_{c_i} \in \{o_{c_1}, \ldots, o_{c_{w'}}\}$ is allocated to $a_{\mathsf{satisfied}}$, it means that clause $c_i$ is satisfied. In allocation $a$, all clauses are unsatisfied and all variables are not assigned a truth-value. If in allocation $a$, we reallocate some $o_{x_i} \in \{x_1, \ldots, x_{w} \}$ to one of the agents $a_{\mathsf{set}(l_{x_i})} \in \{ a_{\mathsf{set}(x_i)} , a_{\mathsf{set}(\neg x_i)} \}$, then by construction we can move all of the resources $o_{c_j, l_{x_i}}, l_{x_i} \in c_j$ to $a_{c_j}$ without lowering the utility of $a_{\mathsf{set}(l_{x_i})}$. Now, because $c_j$ gets 1 extra utility, we are able to reallocate $o_{c_j}$ to $a_{\mathsf{satisfied}}$. 

The key thing to see here is that the procedure we just described is, from the viewpoint of $I$, equivalent to assigning $x_i$ some truth value, and making all clauses true in which the literal occurs that corresponds to that truth-value. In $I'$ this is the same as reallocating some specific resources to some specific agents, and this reallocation can be done without lowering anyone's utility except for the utility of $a_{\mathsf{unassigned}}$. The utility of $a_{\mathsf{unassigned}}$ can only be compensated if $a_{\mathsf{unassigned}}$ gets allocated the resource $o_{\mathsf{satisfied}}$. If that happens, then by construction the utility of $a_{\mathsf{unassigned}}$ gets suddenly strictly higher than in allocation $a$. But we can only reallocate $o_{\mathsf{satisfied}}$ to $a_{\mathsf{unassigned}}$ if all resources $\{o_{c_1}, \ldots, o_{c_{w'}} \}$ are allocated to $a_{\mathsf{satisfied}}$, otherwise the utility of $a_{\mathsf{satisfied}}$ would be too low. Reallocating all of these resources is clearly equivalent with finding a satisfying truth-assignment for the formula.

Now we wil describe the reallocation process in a more systematic way:
When the propositional CNF formula denoted by instance $I$ is satisfiable, there is an allocation $a'$ that Pareto-dominates $a$. It can be obtained in the following way. 
\begin{enumerate}
\item Take allocation $a$ and reallocate the resources $\{o_{x_1}, \ldots, o_{x_w}\}$ to the allocation that corresponds to the assignment that satisfies the formula of $I$. By doing this, the utility of $a_{\mathsf{unassigned}}$ becomes lower than the utility it has in allocation $a$. This problem will be dealt with in step 4.
\item By construction, all of the other resources of the agents that obtained a resource in step 1 can now all be reallocated so that the utility of those agents is not decreased below the utility they have in allocation $a$. (The resource they received in step 1 gets them high enough utility to maintain at least the same utility as in $a$, even if they lose all of their other resources.) So we reallocate all those resources `appropriately' to the agents $\{a_{c_1}, \ldots, a_{c_{w'}} \}$. By appropriately we mean that a reallocated resource is reallocated to the single other agent that has non-zero utility for it. By construction, there is precisely one such agent for each item that is reallocated in this step. 
\item Because, in step 2, the utility of agents $\{a_{c_1}, \ldots, a_{c_{w'}} \}$ is increased, we can reallocate the items $\{o_{c_1}, \ldots, o_{c_{w'}}\}$ to agent $a_{\mathsf{satisfied}}$. Without giving the agents $\{a_{c_1}, \ldots, a_{c_{w'}} \}$ a lower utility than in allocation $a$. Now it is the case that each agent except $a_{\mathsf{unassigned}}$ has a utility that is at least as high as allocation $a$. $a_{\mathsf{unassigned}}$ has no items allocated, so his utility is 0. The utility of $a_{\mathsf{satisfied}}$ is $2w'$ in our current allocation, while in allocation $a$ it was $w'$. 
\item So, as a last step, we can reallocate $o_{\mathsf{satisfied}}$ to $a_{\mathsf{unassigned}}$. The utility of $a_{\mathsf{unassigned}}$ is then $w + 1$ in our new allocation $a'$, while it was only $w$ in allocation $a$. By performing this last step, the utility of $a_{\mathsf{satisfied}}$ decreases to $w$, but this is not a problem since the utility of $a_{\mathsf{satisfied}}$ was also $w$ in allocation $a$.
\end{enumerate}
\end{proof}

\begin{lemma}
If the 3-UNSAT instance $I$ is a YES-instance, i.e. the formula is unsatisfiable, then the allocation $a$ in $I'$ is Pareto-optimal.
\end{lemma}
\begin{proof}
In an allocation $a'$ that Pareto-dominates allocation $a$, at least one agent has strictly greater utility in $a'$ than he has in $a$, and all the other agents have a utility that is at least as great. We divide the proof up in cases, and show that in $a'$ no agent can be the agent that has strictly greater utility than he has in $a$, while all other agents don't have a lower utility than they have in $a$.
\begin{description}
\item[Agent $a_{\mathsf{unassigned}}$:] In $a'$, the utility of agent $a_{\mathsf{unassigned}}$ can only be greater than in $a$ if he gets the resource $o_{\mathsf{satisfied}}$. Because the other agents may not have lower utility than they have in $a$, agent $a_{\mathsf{satisfied}}$ needs then be allocated the set of items $\{ o_{c_1}, \ldots, o_{c_{w'}} \}$. By the same argument, every agent $a_{c_i} \in \{ a_{c_1}, \ldots, a_{c_{w'}} \}$ needs to get allocated at least one of the resources $\{ o_{c_i, l} | l \in c_i \}$. If we allocate such a resource $o_{c_i, l}$ to $a_{c_i}$, then the utility of $a_{\mathsf{set}(l)}$ gets too low, and we must compensate by allocating the resource $o_{x_j}, x_j \in l$ to $a_{\mathsf{set}(l)}$. As explained in the previous lemma, regarding $I$ this is equivalent to setting the variable $x_i$ to a truth value such that clause $c_j$ gets satisfied. We must do this for all clauses, so then there must be an assignment where all of the clauses are satisfied, i.e. $I$ must be a satisfiable instance. Which it isn't.
\item[All other cases:] It is also impossible to create an allocation $a'$ that Pareto-dominates $a$, where some agent $a_i \not = a_{\mathsf{unassigned}}$ has strictly greater utility than in $a$, while all the other agents have a utility that is at least as high as the utility that they had in $a$: no matter what agent we choose for the role of $a_i$, it is always neccessary to allocate at least one of the resources in $\{ o_{x_1}, \ldots, o_{x_w}\}$ to an agent other than $a_{\mathsf{unassigned}}$. This means that we are required to allocate $o_{\mathsf{satisfied}}$ to $a_{\mathsf{unassigned}}$, and we fall back to the case we just proved for agent $a_{\mathsf{unassigned}}$. 

It is easy to check that this is true for any $a_i$ that we pick.
\end{description}
\end{proof}
\end{proof}

\section{Complexity of finding an efficient and envy-free allocation for agents with additive utility}\label{section:eef-existence}

The proof given in the previous section was somewhat of an intermediate result that we came across in the process of finding a proof for our next theorem.
We first make an additional definition.

\begin{definition}[Envy-freeness]
Given a resource allocation setting $\langle A = \{a_1, \ldots, a_n\}, \mathcal{O}, U = \{u_1, \ldots, u_n\}, \mathcal{C}, u_c \rangle$ and an admissable allocation $a$, $a$ is called envy-free iff 
\begin{equation*}
\forall a_i \in A : \forall a_j \in A : u_i(a(a_i)) \geq  u_i(a(a_j)).
\end{equation*}

We can define an envy-freeness constraint $c_{\mathsf{envyfree}}$ so that we can add it to $\mathcal{C}$. $a$ then is not admissable if $a$ is not envy-free.

If there exists an $i$ and there exists a $j$ for which $u_i(a(a_j)) > u_i(a(a_i))$ and $i \not = j$, then $a$ is not envy-free and we say that $a_i$ {\it envies} $a_j$ in allocation $a$.
\end{definition}

Now we state the problem and give a proof that this problem is $\Sigma_2^p$-complete.

\begin{definition}[EEF-EXISTENCE-ADDITIVE]
In the problem EEF-EXISTENCE-ADDITIVE we must decide whether there exists a Pareto-efficient and envy-free admissible allocation in the resource allocation setting $\langle A, \mathcal{O}, U, \mathcal{C}, u_c \rangle$, where
\begin{itemize}
\item $\mathcal{C} = \{ c_{\mathsf{preempt}}, c_{\mathsf{envyfree}}\}$,
\item $\forall u \in U : u$ is an additive utility function.
\end{itemize}

The collective utility function $u_c$ can be disregarded here, so the problem is representable as the 3-tuple $\langle A, \mathcal{O}, V\rangle$. In this 3-tuple, $V = \{v_1, \ldots, v_n\}$ represents the utility functions of $U$. For all $1 \leq i \leq n$, $v_i$ is the representation of $u_i$ as described in definition \ref{definition:additiveutility}.
\end{definition}

\begin{theorem}
EEF-EXISTENCE-ADDITIVE is $\Sigma_2^p$-complete.
\end{theorem}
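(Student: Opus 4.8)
The plan is to prove the two inclusions separately: first membership in $\Sigma_2^p$, then $\Sigma_2^p$-hardness.

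For membership, I would use the characterization $\Sigma_2^p = \mathsf{NP}^{\mathsf{NP}}$ and exhibit a nondeterministic polynomial-time procedure with one call to an $\mathsf{NP}$ oracle. On input $\langle A, \mathcal{O}, V\rangle$, the preemption constraint guarantees that an allocation has polynomial size, so the machine guesses an allocation $a$ and then performs two checks. Envy-freeness is verified directly in polynomial time by testing the inequalities $u_i(a(a_i)) \geq u_i(a(a_j))$ for every ordered pair of agents. Pareto-efficiency is verified by a single oracle query: ``does there exist an allocation that Pareto-dominates $a$?'' is exactly the complement of PO-ALLOCATION-ADDITIVE, which was shown $\mathsf{coNP}$-complete in Theorem \ref{additive}, so this is an $\mathsf{NP}$ question, and we accept when its answer is \emph{no}. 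The allocation is Pareto-efficient and envy-free iff both checks succeed, which is precisely an $\exists\forall$ statement, placing the problem in $\Sigma_2^p$.

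For hardness, I would reduce from the complement of $\forall\exists\mathsf{3CNF}$, i.e.\ from the $\Sigma_2^p$-complete task of deciding, given a formula $\Phi = \forall\vec{x}\,\exists\vec{y}\,\phi(\vec{x},\vec{y})$ with $\phi$ in 3CNF, whether $\Phi$ is \emph{false} --- equivalently whether $\exists\vec{x}\,\forall\vec{y}\,\neg\phi(\vec{x},\vec{y})$ holds. The key idea is to reuse the construction of the previous section, where Pareto-optimality of the built allocation coincided with unsatisfiability of a 3CNF. Here the existentially guessed envy-free allocation should encode a total truth assignment to the outer variables $\vec{x}$, after which Pareto-efficiency of that allocation should coincide with the inner statement $\forall\vec{y}\,\neg\phi(\vec{x},\vec{y})$, that is, with unsatisfiability of $\phi$ once $\vec{x}$ is fixed --- exactly the phenomenon exploited in Theorem \ref{additive}. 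Concretely, I would keep the clause/satisfaction machinery of the previous reduction on the $\vec{y}$-variables and clauses, and prepend an $\vec{x}$-gadget: items and agents whose utility coefficients are calibrated so that an allocation is envy-free precisely when each $\vec{x}$-item is handed to one of its two ``setting'' agents (thereby forcing a genuine, total assignment of $\vec{x}$) and nothing degenerate is reshuffled.

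The main obstacle is making envy-freeness and Pareto-efficiency \emph{simultaneously} satisfiable exactly in the intended case while keeping them from interfering. Three things must hold at once: envy-freeness must pin the allocation down to a clean, total $\vec{x}$-encoding and rule out ``cheating'' allocations that would spuriously be Pareto-optimal; Pareto-efficiency must continue to track inner-unsatisfiability through the gadget of the previous section; and all coefficients must remain additive, nonnegative, and polynomially bounded, so that the reduction runs in polynomial time. As before I would split correctness into two lemmas, one for each implication, and I expect the delicate verification to be that \emph{no} Pareto-improving reallocation exists precisely when $\forall\vec{y}\,\neg\phi$ holds, while an envy-free witness still survives in that case; dovetailing the envy-free witness with the Pareto-optimality argument of Theorem \ref{additive} is where the real work lies.
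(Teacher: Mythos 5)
Your overall strategy coincides with the paper's: membership via guess-and-check with one universal phase (your $\mathsf{NP}^{\mathsf{NP}}$ machine is an equivalent packaging of the paper's one-alternation argument), and hardness via a Karp reduction from the complement of $\forall\exists\mathsf{3CNF}$, in which envy-freeness forces an allocation to encode an assignment of the universally quantified variables and Pareto-efficiency then expresses unsatisfiability of the inner formula. The membership half of your proposal is complete and correct.

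The hardness half, however, has a genuine gap: the construction is never given, and the construction is the entire difficulty. Saying the coefficients are ``calibrated so that'' envy-freeness pins down a total $\vec{x}$-encoding names the goal without achieving it. Concretely, three things must be proved about an explicit gadget, and the paper needs substantial dedicated machinery for each. First, every allocation encoding an outer assignment must be envy-free (otherwise no witness exists when $\exists \vec{x}\,\forall \vec{y}\,\neg\phi$ holds); this is Lemma \ref{correctness1}, and it requires envy-protection agents and resources ($a_{c,l}^{\mathsf{envyprotection}}$, $a_{\mathsf{unassigned}}^{\mathsf{envyprotection}}$, $o_{\mathsf{envy1}}$, $o_{\mathsf{envy2}}$) together with helper agents and helper resources for each universal literal, whose sole purpose is to neutralize envy that the clause machinery of Theorem \ref{additive} would otherwise create. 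Second, every Pareto-efficient \emph{and} envy-free allocation must encode such an assignment (Lemma \ref{correctness2}); note that this direction uses Pareto-efficiency as well as envy-freeness (e.g., a resource valued by only one agent must go to that agent, else a Pareto-improvement exists), so your claim that envy-freeness alone pins the allocation down cannot be made to work as stated. Third, Pareto-improvability of the pinned-down allocation must be equivalent to satisfiability of $\phi$ on the encoded partial assignment (Lemmas \ref{correctness3} and \ref{correctness4}). This last point hides a detail your sketch glosses over: in the paper's pinned-down allocations the existential variables are left \emph{unassigned} (their resources sit with $a_{\mathsf{unassigned}}$), rather than being assigned; this ``unassigned'' state, inherited from the reduction of Theorem \ref{additive}, is exactly what allows a Pareto-improvement to simulate a choice of $\vec{y}$. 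Without exhibiting coefficients that make all three properties hold simultaneously --- which is where the large value $M$, the compensation resources, and the two envy items come in --- the reduction remains a statement of intent rather than a proof.
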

\begin{proof}
Membership of $\Sigma_2^p$ is easily shown. The problem can be decided by an alternating turing machine that makes 1 alternation and starts in an existential state: In the existential state, an allocation $a$ is guessed, and it is checked if this allocation is envy-free. The turing machine then enters the universal state. In this universal state it is checked for all possible allocations if an allocation Pareto-dominates $a$. If this is not the case, then $a$ is Pareto-efficient and envy-free.

We prove hardness by a Karp reduction from the complement of the problem $\forall \exists$3CNF. $\forall \exists$3CNF is $\Pi_2^p$-complete, that is, complete for the complement of $\Sigma_2^p$. It is perhaps the most well known complete problem in the second level of the polynomial hierarchy. We selected this problem from \cite{polyhierarchy}, a list of complete problems in the polynomial hierarchy. 

An instance of $\forall \exists$3CNF consists of two disjoint sets of propositional variables $X_{\forall} = \{x_{1}^{\forall}, \ldots, x_{|X_{\forall}|}^{\forall}\}$ and $X_{\exists} = \{x_{1}^{\exists}, \ldots, x_{|X_{\exists}|}^{\exists}\}$ and a propositional formula in 3CNF over the variables in $X_{\forall} \cup X_{\exists}$. This propositional formula is represented as the set of clauses $C = \{c_1, \ldots, c_{|C|}\}$. A clause $c_i \in C$ is a set of at most 3 literals. The problem for a $\forall \exists$3CNF-instance is to decide whether for every possible assignment of the variables in $X_{\forall}$, there exists some assignment of the variables of $X_{\exists}$ that makes the formula true\footnote{To remove ambiguity: please note that the assignment of the variables in $X_{\exists}$ needs not be the same for every assignment of the variables in $X_{\forall}$.}.

For this proof we must introduce some additional terminology: given a set of propositional variables, in a {\it partial truth-assignment}, or simply {\it partial assignment} to these variables, only a part of the variables are assigned a truth value, and the other part is left unassigned. Also, given a partial assignment $s$ on a set of propositional variables and a propositional formula on these propositional variables, we say that the formula is {\it satisfiable on $s$} iff we can transform $s$ into a full assignment $s'$ by assigning in $s$ a truth-value to the unassigned variables, such that $s'$ satisfies the formula.

We make a minor assumption on the $\forall \exists$3CNF instances. For every variable $x \in X_{\forall} \cup X_{\exists}$, both the literals $x_i$ and $\neg x_i$ must appear at least once in the formula $C$. Fortunately, this assumption can be made without loss of generality: if we have a $\forall \exists$3CNF instance where the assumption doesn't hold for some variable $x
\in X_{\forall} \cup X_{\exists}$, then we can simply add the tautological clause $\{x, \neg x\}$ to $C$. We make this assumption in order to reduce the complicatedness of our reduction.

In this proof we use the following notational conventions. We will use the symbol $l$ to refer to a literal and we will use for any variable $x_i \in X_{\exists} \cup X_{\forall}$ the symbol $l_{x_i}$ to refer to a literal in which $x_i$ occurs. Also, if we use the notation $\neg l_{x_i}$, then by that we mean the positive literal $x_i$ if $l_{x_i}$ is a negative literal, and we mean the negative literal $\neg x_i$ if $l_{x_i}$ is a positive literal. Lastly, We define the set $C_{l_{x_i}}$ for each literal of each variable $x_i \in X_{\exists} \cup X_{\forall}$ as the set of clauses in which $l_{x_i}$ occurs.

The reduction in this proof resembles the reduction in the proof of theorem \ref{additive}: we reuse a lot of the same ideas and tricks. The reduction for this proof however, is more complex. We have to deal this time with universally quantified variables and envy-freeness. Moreover, we cannot ``set'' an allocation in advance, as we could in the reduction of the proof of theorem \ref{additive}. We will now describe the entire reduction. We advise the reader to work out an example for a small $\forall \exists$3CNF-instance in the table format as we did in the proof of theorem \ref{additive}. This is because we won't give an example in this proof: the table format size of the EEF-EXISTENCE-ADDITIVE instance is too large to put on this sheet, even for small instances.

Given a $\forall \exists$3CNF-instance 
\begin{equation*}
I = \langle X_{\forall} = \{x_{1}^{\forall}, \ldots, x_{|X_{\forall}|}^{\forall}\}, X_{\exists} = \{x_{1}^{\exists}, \ldots, x_{|X_{\exists}|}^{\exists}\}, C = \{c_1, \ldots, c_{|C|}\}\rangle,
\end{equation*} 
we reduce it to a EEF-EXISTENCE-ADDITIVE-instance $I' = \langle A, \mathcal{O}, V \rangle$ in the following way.
\begin{itemize}
\item $|A| = 4|X_{\forall}| + 2|X_{\exists}| + |C| + L_{\forall} + 3$, where $L_{\forall}$ is the total number of literal occurences in $C$ of variables in $X_{\forall}$. For each variable $x_i^{\forall} \in X_{\forall}$, four agents $a_{\mathsf{set}(x_i^{\forall})}$, $a_{\mathsf{set}(\neg x_i^{\forall})}$, $a_{\mathsf{set}(x_i^{\forall})}^{\mathsf{helper}}$ and $a_{\mathsf{set}(\neg x_i^{\forall})}^{\mathsf{helper}}$ are introduced. For each variable $x_i^{\exists} \in X_{\exists}$, two agents $a_{\mathsf{set}(x_i^{\exists})}$ and $a_{\mathsf{set}(\neg x_i^{\exists})}$ are introduced. For each clause $c_i \in C$, the agent $a_{c_i}$ is introduced. For all $c_i \in C$, for each literal $l \in c_i$ wherein a variable of $X_{\forall}$ occurs, we introduce the agent $a_{c_i,l}^{\mathsf{envyprotection}}$. The remaining three agents are $a_{\mathsf{unassigned}}$, $a_{\mathsf{unassigned}}^{\mathsf{envyprotection}}$, and $a_{\mathsf{satisfied}}$. 

For ease of explaining and understanding the rest of the proof, we introduce the following symbols and terminology: 
\begin{itemize}
\item We refer to the set $\{a_{c_1}, \ldots, a_{c_{|C|}} \}$ as $A_{\mathsf{ca}}$. Alternatively, we may refer to those agents as {\it clause agents}. 
\item We refer to the set $\{ a_{\mathsf{set}(l)} | x_i^{\exists} \in l \}$ as $A_{\mathsf{evaa}}$ Alternatively, we may refer to those agents as {\it existential variable assignment agents}.
\item We refer to the set $\{ a_{\mathsf{set}(l)} | x_i^{\forall} \in l \}$ as $A_{\mathsf{uvaa}}$ Alternatively, we may refer to those agents as {\it universal variable assignment agents}.
\item We refer to the set $\{ a_{\mathsf{set}(l)}^{\mathsf{helper}} | x_i^{\forall} \in l \}$ as $A_{\mathsf{uvaha}}$ Alternatively, we may refer to those agents as {\it universal variable assignment helper agents}.
\item We refer to the set $\{ a_{c,l}^{\mathsf{envyprotection}} | c \in C \wedge l \in c \}$ as $A_{\mathsf{ulepa}}$. Alternatively, we may refer to those resources as {\it universal literal envy-protection agents}.
\end{itemize}
Using these definitions, we have 
\begin{equation*}
A = A_{\mathsf{ca}} \cup A_{\mathsf{evaa}} \cup A_{\mathsf{uvaa}} \cup A_{\mathsf{uvaha}} \cup A_{\mathsf{ulepa}} \cup \{a_{\mathsf{unassigned}}, a_{\mathsf{unassigned}}^{\mathsf{envyprotection}}, a_{\mathsf{satisfied}} \}.
\end{equation*}
\item $|\mathcal{O}| = 4|X_{\forall}| + |X_{\exists}| + 2|C| + L + L_{\forall} + 3$, where $L$ is the total number of literal occurences in the 3CNF formula $C$, and $L_{\forall}$ is the total number of literal occurences in $C$ of variables in $X_{\forall}$. For all variables $x_i^{\forall} \in X_{\forall}$, we introduce the resources $o_{x_i^{\forall}}$, $o_{x_i^{\forall}}^{\mathsf{compensation}}$, $o_{\mathsf{set}(x_i^{\forall})}^{\mathsf{helper}}$ and $o_{\mathsf{set}(\neg x_i^{\forall})}^{\mathsf{helper}}$. For all variables $x_i^{\exists} \in X_{\exists}$, we introduce the resource $o_{x_i^{\exists}}$. For each clause $c_i \in C$, we introduce the resources $o_{c_i}$ and $o_{c_i}^{\mathsf{compensation}}$. 
For all $c_i \in C$, for each literal $l \in c_i$, we introduce the resource $o_{c_i, l}$. For all $c_i \in C$, for each literal $l \in c_i$ wherein a variable of $X_{\forall}$ occurs, we introduce the resource $o_{c_i,l}^{\mathsf{envyprotection}}$. The remaining three resources are $o_{\mathsf{satisfied}}$, $o_{\mathsf{envy1}}$ and $o_{\mathsf{envy2}}$.

For ease of explaining and understanding the rest of the proof, we introduce the following symbols and terminology: 
\begin{itemize}
\item We refer to the set $\{o_{c_1}, \ldots, o_{c_{|C|}} \}$ as $\mathcal{O}_{\mathsf{cr}}$. Alternatively, we may refer to those resources as {\it clause resources}. 
\item We refer to the set $\{o_{c_1}^{\mathsf{compensation}}, \ldots, o_{c_{|C|}}^{\mathsf{compensation}} \}$ as $\mathcal{O}_{\mathsf{ccr}}$. Alternatively, we may refer to those resources as {\it clause compensation resources}.
\item We refer to the set $\{ o_{c, l} | c \in C \wedge l \in c \wedge x_{\forall} \in l \wedge x_{\forall} \in X_{\forall} \}$ as $\mathcal{O}_{\mathsf{ulr}}$. Alternatively, we may refer to those resources as {\it universal literal resources}.
\item We refer to the set $\{ o_{c, l} | c \in C \wedge l \in c \wedge x \in l \wedge x \in X_{\exists} \}$ as $\mathcal{O}_{\mathsf{elr}}$. Alternatively, we may refer to those resources as {\it existential literal resources}.
\item We refer to the set $\mathcal{O}_{\mathsf{ulr}} \cup \mathcal{O}_{\mathsf{elr}}$ as $\mathcal{O}_{\mathsf{lr}}$. Alternatively, we may refer to those resources as {\it literal resources}.
\item We refer to the set $\{ o_{x_1^{\forall}} , \ldots , o_{x_{|X_{\forall}|}^{\forall}} \}$ as $\mathcal{O}_{\mathsf{uvr}}$. Alternatively, we may refer to those resources as {\it universal variable resources}.
\item We refer to the set $\{ o_{x_1^{\exists}} , \ldots , o_{x_{|X_{\exists}|}^{\exists}} \}$ as $\mathcal{O}_{\mathsf{evr}}$. Alternatively, we may refer to those resources as {\it existential variable resources}.
\item We refer to the set $\mathcal{O}_{\mathsf{uvr}} \cup \mathcal{O}_{\mathsf{evr}}$ as $\mathcal{O}_{\mathsf{vr}}$. Alternatively, we may refer to those resources as {\it variable resources}.
\item We refer to the set $\{ o_{x_1^{\forall}}^{\mathsf{compensation}} , \ldots , o_{x_{|X_{\forall}|}^{\forall}}^{\mathsf{compensation}} \} $ as $\mathcal{O}_{\mathsf{uvcr}}$. Alternatively, we may refer to those resources as {\it universal variable compensation resources}.
\item We refer to the set $\{ o_{\mathsf{set}(x_i^{\forall})}^{\mathsf{helper}}, \ldots, o_{\mathsf{set}(x_{|X_{\forall}|}^{\forall})}^{\mathsf{helper}} \} \cup \{ o_{\mathsf{set}(\neg x_i^{\forall})}^{\mathsf{helper}}, \ldots, o_{\mathsf{set}(\neg x_{|X_{\forall}|}^{\forall})}^{\mathsf{helper}} \}$ as $\mathcal{O}_{\mathsf{uvahr}}$. Alternatively, we may refer to those resources as {\it universal variable assignment helper resources}.
\item We refer to the set $\{ o_{c,l}^{\mathsf{envyprotection}} | c \in C \wedge l \in c \}$ as $\mathcal{O}_{\mathsf{ulepr}}$. Alternatively, we may refer to those resources as {\it universal literal envy-protection resources}.
\end{itemize}
Using these definitions, we have 
\begin{equation*}
\mathcal{O} = 
\mathcal{O}_{\mathsf{cr}} \cup 
\mathcal{O}_{\mathsf{ccr}} \cup 
\mathcal{O}_{\mathsf{ulr}} \cup 
\mathcal{O}_{\mathsf{elr}} \cup 
\mathcal{O}_{\mathsf{uvr}} \cup 
\mathcal{O}_{\mathsf{evr}} \cup 
\mathcal{O}_{\mathsf{uvcr}} \cup 
\mathcal{O}_{\mathsf{uvahr}} \cup 
\mathcal{O}_{\mathsf{ulepr}} \cup 
\{ o_{\mathsf{satisfied}}, o_{\mathsf{envy1}}, o_{\mathsf{envy2}} \}.
\end{equation*}
\item To complete the reduction, we specify the additive utility functions. Due to the extensive use of subscripts and superscripts for the agents and resources, we don't use the same notation for this as we did in the proof for theorem \ref{additive}. All members of $V$ are vectors of coefficients. $v_i \in V$ is the vector representing the additive utility function of agent $a_i$. The members of $v_i$ are coefficients. In $v_i$ there is one coefficient for each resource in $\mathcal{O}$. We name these coefficients as follows. Let $a \in A$, and let $o \in \mathcal{O}$. Then we simply denote the utility-coefficient of agent $a$ for resource $o$ as $\alpha[a, o]$.

In the list below, let $M$ be an extremely large number. By default all coefficients of all agents are set to zero, with the following exceptions:
\begin{itemize}
\item For all $o_{c_i} \in \mathcal{O}_{\mathsf{cr}}$: 
\begin{eqnarray*}
\alpha[a_{c_i}, o_{c_i}] & := & M, \\ \alpha[a_{\mathsf{satisfied}}, o_{c_i}] & := & 1, \\
\forall l \in c_i : \alpha[a_{c_i,l}^{\mathsf{envyprotection}}, o_{c_i}] & := & M.
\end{eqnarray*}
\item For all $o_{c_i}^{\mathsf{compensation}} \in \mathcal{O}_{\mathsf{ccr}}$: 
\begin{eqnarray*}
\alpha[a_{c_i}, o_{c_i}^{\mathsf{compensation}}] & := & M-1, \\ \alpha[a_{\mathsf{unassigned}}, o_{c_i}^{\mathsf{compensation}}] & := & 1.
\end{eqnarray*}
\item For all $o_{c, l} \in \mathcal{O}_{\mathsf{ulr}}$:
\begin{eqnarray*}
\alpha[a_{c}, o_{c,l}] & := & 1, \\ \alpha[a_{\mathsf{set}(l)}^{\mathsf{helper}}, o_{c,l}] & := & 1, \\
\alpha[a_{c,l}^{\mathsf{envyprotection}}, o_{c,l}] & := & 1.
\end{eqnarray*}
\item For all $o_{c, l} \in \mathcal{O}_{\mathsf{elr}}$:
\begin{eqnarray*}
\alpha[a_{c}, o_{c,l}] & := & 1, \\ \alpha[a_{\mathsf{set}(l)}, o_{c,l}] & := & 1.
\end{eqnarray*}
\item For all $o_{x_i^{\forall}} \in \mathcal{O}_{\mathsf{uvr}}$:
\begin{eqnarray*}
\alpha[a_{\mathsf{set}(x_i^{\forall})}, o_{x_i^{\forall}}] & := & 1, \\ \alpha[a_{\mathsf{set}(\neg x_i^{\forall})},  o_{x_i^{\forall}}] & := & 1.
\end{eqnarray*}
\item For all $o_{x_i^{\exists}} \in \mathcal{O}_{\mathsf{evr}}$:
\begin{eqnarray*}
\alpha[a_{\mathsf{set}(x_i^{\exists})}, o_{x_i^{\exists}}] & := & |C_{x_i^{\exists}}|, \\ \alpha[a_{\mathsf{set}(\neg x_i^{\exists})}, o_{x_i^{\exists}}] & := & |C_{\neg x_i^{\exists}}|, \\ \alpha[a_{\mathsf{unassigned}}, o_{x_i^{\exists}}] & := & 1.
\end{eqnarray*}
\item For all $o_{x_i^{\forall}}^{\mathsf{compensation}} \in \mathcal{O}_{\mathsf{uvcr}}$:
\begin{eqnarray*}
\alpha[a_{\mathsf{set}(x_i^{\forall})}, o_{x_i^{\forall}}^{\mathsf{compensation}}] & := & 1, \\ \alpha[a_{\mathsf{set}(\neg x_i^{\forall})}, o_{x_i^{\forall}}^{\mathsf{compensation}}] & := & 1, \\ \alpha[a_{\mathsf{unassigned}}, o_{x_i^{\forall}}^{\mathsf{compensation}}] & := & 1.
\end{eqnarray*}
\item For all $o_{\mathsf{set}(l_{x_i^{\forall}})}^{\mathsf{helper}} \in \mathcal{O}_{\mathsf{uvahr}}$:
\begin{eqnarray*}
\alpha[a_{\mathsf{set}(l_{x_i^{\forall}})}^{\mathsf{helper}}, o_{\mathsf{set}(l_{x_i^{\forall}})}^{\mathsf{helper}}] & := & |C_{l_{x_i^{\forall}}}|, \\ \alpha[a_{\mathsf{set}(l_{x_i^{\forall}})}, o_{\mathsf{set}(l_{x_i^{\forall}})}^{\mathsf{helper}}] & := & 1, \\ \alpha[a_{\mathsf{set}(\neg l_{x_i^{\forall}})}, o_{\mathsf{set}(l_{x_i^{\forall}})}^{\mathsf{helper}}] & := & 1.
\end{eqnarray*}
\item For all $o_{c,l}^{\mathsf{envyprotection}} \in \mathcal{O}_{\mathsf{ulepr}}$:
\begin{eqnarray*}
\alpha[a_{c,l}^{\mathsf{envyprotection}}, o_{c,l}^{\mathsf{envyprotection}}] & := & M.
\end{eqnarray*}
\item For $o_{\mathsf{satisfied}}$:
\begin{eqnarray*}
\alpha[a_{\mathsf{unassigned}}, o_{\mathsf{satisfied}}] & := & |X_{\exists}| + |X_{\forall}| + |C| + 1 , \\ \alpha[a_{\mathsf{satisfied}}, o_{\mathsf{satisfied}}] & := & |C|.
\end{eqnarray*}
\item For $o_{\mathsf{envy1}}$:
\begin{eqnarray*}
\alpha[a_{\mathsf{unassigned}}, o_{\mathsf{envy1}}] & := & 2 \times \alpha[a_{\mathsf{unassigned}}, o_{\mathsf{satisfied}}], \\ \alpha[a_{\mathsf{satisfied}}, o_{\mathsf{envy1}}] & := & \frac{1}{2}.
\end{eqnarray*}
\item For $o_{\mathsf{envy2}}$:
\begin{eqnarray*}
\alpha[a_{\mathsf{unassigned}}, o_{\mathsf{envy2}}] & := & \alpha[a_{\mathsf{unassigned}}, o_{\mathsf{envy1}}] + |X_{\exists}| + |X_{\forall}| + |C|, \\ \alpha[a_{\mathsf{unassigned}}^{\mathsf{envyprotection}}, o_{\mathsf{envy2}}] & := & M.
\end{eqnarray*}
\end{itemize}
\end{itemize}

That completes the reduction. It should be obvious that generating this EEF-EXISTENCE-ADDITIVE-instance from the $\forall \exists$3CNF instance takes polynomial time. We now continue with the correctness proof.

$\forall \exists$3CNF is a $\Pi_2^p$-complete problem, and we want to prove EEF-EXISTENCE-ADDITIVE is $\Sigma_2^p$-complete. Therefore we need to show that in $I'$ there is only a Pareto-efficient, envy-free (EEF) allocation if there exists some assignment to the variables in $X_{\forall}$ for which there is no assignment to the variables in $X_{\exists}$ which makes the 3CNF-formula $C$ true.

Now we will outline the correctness-proof for this reduction. After that we finish the proof by giving the definition and lemmas that are ommitted in the outline. 

We define in definition \ref{XA-allocations} the specific set of allocations for $I'$, that correspond to a specific type of partial truth-assignment to the variables in $I$. Namely, assignments that satisfy the following two conditions: 
\begin{enumerate}
\item All universally quantified variables are set to either true or false, and 
\item all existential variables are left unassigned. 
\end{enumerate}
In lemma \ref{correctness1} we prove that all allocations that correspond to such truth-assignments are envy-free. We call these allocations $X_\forall$-allocations. We will show in lemma \ref{correctness2} that in $I'$, any EEF allocation must be an $X_\forall$-allocation. Next, we will show in lemmas \ref{correctness3} and \ref{correctness4} that for an $X_{\forall}$-allocation, a Pareto-improvement is possible only if in $I$ the formula can get satisfied on the partial truth-assignment that corresponds to this $X_\forall$-allocation. Now if $I$ is a YES-instance of $\forall \exists$3CNF, then clearly the formula is satisfiable on all partial assignments with the two aforementioned conditions, hence a pareto-improvement is possible on all envy-free allocations. So then $I'$ is a NO-instance of EEF-EXISTENCE-ADDITIVE. On the other hand, if $I$ is a NO-instance of $\forall \exists$3CNF, then clearly there must be a partial assignment satisfying the 2 aforementioned conditions for which the formula is not satisfiable. Hence there is in this case an envy-free allocation that is pareto-optimal. The remainder of the proof consists of definition \ref{XA-allocations} and lemmas \ref{correctness1}, \ref{correctness2}, \ref{correctness3} and \ref{correctness4}.

\begin{definition}[$X_{\forall}$-assignments and $X_{\forall}$-allocations (corrected)]
\label{XA-allocations}
For $I$, we define an {\it $X_{\forall}$-assignment} as a partial assignment to the variables in $X_{\forall} \cup X_{\exists}$ where all variables in $X_{\forall}$ are set to either true or false, and all variables in $X_{\exists}$ are not assigned to a truth value. Given an $X_{\forall}$-assignment $s$, we define the corresponding {\it $X_{\forall}$-allocation} in the following way:
\begin{enumerate}
\item All agents $a_{c_i} \in A_{\mathsf{ca}}$ get allocated the resource $o_{c_i}$.
\item For all $x_i^{\exists} \in X_{\exists}$, all agents $a_{\mathsf{set}(l_{x_i^{\exists}})} \in A_{\mathsf{evaa}}$ get allocated the resources $\{o_{c, l_{x_i^{\exists}}} | l_{x_i^{\exists}} \in c\}$.
\item For all $x_i^{\forall} \in X_{\forall}$, for all pairs of agents $a_{\mathsf{set}(x_i^{\forall})} \in A_{\mathsf{uvaa}}, a_{\mathsf{set}(\neg x_i^{\forall})} \in A_{\mathsf{uvaa}}$. Allocate $o_{x_i^{\forall}}$ to one of the two agents, it doesn't matter which one, say $a_{\mathsf{set}(l_{x_i^{\forall}})}$. Now, if $x_i$ is true in $s$, allocate $o_{\mathsf{set}(x_i^{\forall})}^{\mathsf{helper}}$ to $a_{\mathsf{set}(\neg l_{x_i^{\forall}})}$ and allocate $o_{\mathsf{set}(\neg x_i^{\forall})}^{\mathsf{helper}}$ to $a_{\mathsf{set}(\neg l_{x_i^{\forall}})}^{\mathsf{helper}}$. Otherwise, if $x_i$ is false in $s$, allocate these two resources the other way around: allocate $o_{\mathsf{set}(x_i^{\forall})}^{\mathsf{helper}}$ to $a_{\mathsf{set}(\neg l_{x_i^{\forall}})}^{\mathsf{helper}}$ and allocate $o_{\mathsf{set}(\neg x_i^{\forall})}^{\mathsf{helper}}$ to $a_{\mathsf{set}(\neg l_{x_i^{\forall}})}$. 
\item All agents $a_{c,l}^{\mathsf{envyprotection}} \in A_{\mathsf{ulepa}}$ get the resource $o_{c,l}^{\mathsf{envyprotection}}$.
\item $a_{\mathsf{unassigned}}$ gets allocated all of the resources $\mathcal{O}_{\mathsf{evr}} \cup \mathcal{O}_{\mathsf{ccr}} \cup \mathcal{O}_{\mathsf{uvcr}} \cup \{o_{\mathsf{envy1}} \}$.
\item $a_{\mathsf{unassigned}}^{\mathsf{envyprotection}}$ gets allocated the resource $o_{\mathsf{envy2}}$
\item $a_{\mathsf{satisfied}}$ gets allocated the resource $o_{\mathsf{satisfied}}$.
\item 
\item The only resources that have not been allocated up to this point are the universal literal resources $o_{c,l}$.
If $l$ is not true in $s$, then $o_{c,l}$ can be allocated to either $a_{c,l}^{\mathsf{envyprotection}}$ or they are allocated to $a_{\mathsf{set}(l)}^{\mathsf{helper}}$. It doesn't matter which of the two.
If $l$ is true in $s$, then $o_{c,l}$ must be allocated to $a_{\mathsf{set}(l)}^{\mathsf{helper}}$, and thus may not be allocated to $a_{c,l}^{\mathsf{envyprotection}}$.
\end{enumerate}
\end{definition}

\begin{lemma}
\label{correctness1}
All $X_\forall$-allocations are envy-free.
\end{lemma}
\begin{proof}
Let $a$ be any $X_\forall$-allocation for $I'$ and let $s$ be the corresponding $X_\forall$-assignment for $I$. For every agent we will show that he doesn't envy any other agent. In this proof we say that an agent {\it wants} a resource if the agent has a non-zero utility-coefficient for that resource. For simplicity we also say that an agent {\it has} a resource if he is allocated that resource.
\begin{itemize}
\item $a_{\mathsf{unassigned}}^{\mathsf{envyprotection}}$ doesn't envy any agent because he has the single resource for which he has a non-zero utility-coefficient. 
\item $a_{\mathsf{satisfied}}$ doesn't envy any agent. Its utility in allocation $a$ is $|C|$; the total utility of the $|C|+1$ resources that he wants but doesn't have is $|C|+1$. For all of these $|C|+1$ resources, $a_{\mathsf{satisfied}}$ has a utility-coefficient of 1. So $a_{\mathsf{satisfied}}$ would only envy an agent if there is an agent in $a$ that has all of these $|C|+1$ resources, and that's not the case.
\item $a_{\mathsf{unassigned}}$ doesn't envy any other agent because the only items he wants but doesn't have are $o_{\mathsf{envy2}}$ and $o_{\mathsf{satisfied}}$. The former is allocated to $a_{\mathsf{unassigned}}^{\mathsf{envyprotection}}$ and the latter is allocated to $a_{\mathsf{satisfied}}$. $a_{\mathsf{unassigned}}$ doesn't envy $a_{\mathsf{unassigned}}^{\mathsf{envyprotection}}$ because the utility-coefficient that $a_{\mathsf{unassigned}}$ has for $o_{\mathsf{envy2}}$ is equal to (and not higher than) the utility that $a_{\mathsf{unassigned}}$ currently has in $a$. $a_{\mathsf{unassigned}}$ also doesn't envy $a_{\mathsf{satisfied}}$ because the utility-coefficient that $a_{\mathsf{unassigned}}$ has for $o_{\mathsf{satisfied}}$ is lower than the utility that $a_{\mathsf{unassigned}}$ currently has in $a$.
\item For all $a_{c,l}^{\mathsf{envyprotection}} \in A_{\mathsf{ulepa}}$: $a_{c,l}^{\mathsf{envyprotection}}$ has an item for which he has a utility coefficient of $M$. For $a_{c,l}^{\mathsf{envyprotection}}$, there are two more items that he wants. For one of those items he has a utility-coefficient of $M$. For the other item he has a utility-coefficient of $1$. These items are not both allocated to the same agent, so $a_{c,l}^{\mathsf{envyprotection}}$ envies no-one. 
\item All $a_{c_i} \in A_{\mathsf{ca}}$ have no envy: $a_{c_i}$ has a utility of $M$. The total utility of all items that $a_{c_i}$ wants but doesn't have is $M-1+|c_i|$. For the resource $o_{c_i}^{\mathsf{compensation}}$, $a_{c_i}$ has a utility coefficient of $M-1$. For the other resources that $a_{c_i}$ wants but doesn't have (at most 3), $a_{c_i}$ has a utility coefficient of 1. These are literal resources. Literal resources and $o_{c_i}^{\mathsf{compensation}}$ are not all allocated to the same agent in allocation $a$, so $a_{c_i}$ doesn't envy any agent.
\item For all $a_{\mathsf{set}(l)} \in A_{\mathsf{evaa}}$, $a_{\mathsf{set}(l)}$ has a utility of $|C_{l}|$ in $a$. The maximal utility they can have is $2|C_{l}|$, so $a_{\mathsf{set}(l)}$ doesn't envy anyone because he already has half of his total possible utility. 
\item For all $a_{\mathsf{set}(l)}^{\mathsf{helper}} \in A_{\mathsf{uvaha}}$, $a_{\mathsf{set}(l)}^{\mathsf{helper}}$ has a utility of at least $|C_{l}|$ in $a$. The maximal utility they can have is $2|C_{l}|$, so $a_{\mathsf{set}(l)}$ doesn't envy anyone because he already has half of his total possible utility. 
\item All $a_{\mathsf{set}(l)} \in A_{\mathsf{uvaa}}$ have a utility of 1 in $a$. The maximal utility they can have is 4. There are 3 items that $a_{\mathsf{set}(l)}$ wants but doesn't have. For all of these 3 items, $a_{\mathsf{set}(l)}$ has a utility-coefficient of 1. $a_{\mathsf{set}(l)}$ doesn't envy anyone because each of these 3 items is allocated to a different agent: one of these 3 items is allocated to $a_{\mathsf{unassigned}}$, one is allocated to $a_{\mathsf{set}(\neg l)}$, and one is allocated to either $a_{\mathsf{set}(l)}^{\mathsf{helper}}$ or $a_{\mathsf{set}(\neg l)}^{\mathsf{helper}}$.
\end{itemize}
\end{proof}

\begin{lemma}
\label{correctness2}
All EEF-allocations must be $X_\forall$-allocations.
\end{lemma}
\begin{proof}
We show this by reasoning about how the resources must be allocated in order to achieve envy-freeness and Pareto-optimality. After having done this, it turns out that the set of allocations that are possibly EEF is exactly the set of all $X_{\forall}$-allocations.

First of all, it doesn't make sense to allocate a resource to an agent whose utility-coefficient is zero for that resource. A Pareto-improvement is always possible in such an allocation, by simply reallocating the resource to an agent that has a positive utility-coefficient for it. This is why we will only consider allocating resources to agents who have positive utility-coefficients for the resources. By this argument it immediately follows that all $o_{c,l}^{\mathsf{envyprotection}} \in \mathcal{O}_{\mathsf{ulepr}}$ must be allocated to $a_{c,l}^{\mathsf{envyprotection}}$.

$o_{\mathsf{envy2}}$ must be allocated to $a_{\mathsf{unassigned}}^{\mathsf{envyprotection}}$, or else he would envy agent $a_{\mathsf{unassigned}}$. Also, we see that $a_{\mathsf{unassigned}}$ always envies $a_{\mathsf{satisfied}}$ if $o_{\mathsf{envy1}}$ isn't allocated to $a_{\mathsf{unassigned}}$, because $a_{\mathsf{unassigned}}$ has a utility-coefficient of $2(X_{\exists} + X_{\forall}) + 2$ for $o_{\mathsf{envy1}}$. This is more than half of the maximal utility it is still able to get (given that $o_{\mathsf{envy2}}$ is allocated to $a_{\mathsf{unassigned}}^{\mathsf{envyprotection}}$).

Next, it follows that $o_{\mathsf{satisfied}}$ must be allocated to $a_{\mathsf{satisfied}}$, since if it would be allocated to $a_{\mathsf{unassigned}}$, then $a_{\mathsf{satisfied}}$ always envies $a_{\mathsf{unassigned}}$ because $a_{\mathsf{unassigned}}$ then has the items $o_{\mathsf{satisfied}}$ and $o_{\mathsf{envy1}}$. If $a_{\mathsf{satisfied}}$ would get this bundle of items, then he has a utility that's more than half of his total possible utility, so $a_{\mathsf{satisfied}}$ would envy $a_{\mathsf{unassigned}}$ in that case. 

Given our current set of EEF-allocation-requirements up till now, it's clear that $a_{\mathsf{unassigned}}$ must get allocated all of the resources $\mathcal{O}_{\mathsf{evr}} \cup \mathcal{O}_{\mathsf{uvcr}} \cup \mathcal{O}_{\mathsf{ccr}}$. Only if we allocate all of these resources to $a_{\mathsf{unassigned}}$, then the utility of $a_{\mathsf{unassigned}}$ is high enough to not envy $a_{\mathsf{unassigned}}^{\mathsf{envyprotection}}$. 

At this point, it is certain that for all $o_{c_i} \in \mathcal{O}_{\mathsf{cr}}$, $o_{c_i}$ must be allocated to $a_{c_i}$. This must be the case because: firstly, $a_{c_i}$ has a utility-coefficient of $M$ for this resource; secondly, $a_{c_i}$ has a utility of $M-1$ for $o_{c_i}^{\mathsf{compensation}}$, but according to our current set of EEF-allocation-requirements, $o_{c_i}^{\mathsf{compensation}}$ must already be allocated to $a_{\mathsf{unassigned}}$; and thirdly, $a_{c_i}$ has a utility-coefficient of 1 for all other resources that $a_{c_i}$ wants. That is very low compared to $M$, so even if $a_{c_i}$ would get all of these resources instead of $o_{c_i}$, $a_{c_i}$ would still envy the agent that gets $o_{c_i}$.

Because all items $o_{x_i^{\exists}} \in \mathcal{O}_{\mathsf{evr}}$ must be allocated to $a_{\mathsf{unassigned}}$, the agents $a_{\mathsf{set}(x_i^{\exists})}$ must get allocated all of the resources that $a_{\mathsf{set}(x_i^{\exists})}$ wants, except for $o_{x_i^{\exists}}$. These are exactly the set of resources $\{o_{c,l} | x_i \in l \}$. Allocating these resources to $a_{\mathsf{set}(x_i^{\exists})}$ makes his utility equal to $\alpha[a_{\mathsf{set}(x_i^{\exists})}, o_{x_i^{\exists}}]$, and therefore it is ensured that $a_{\mathsf{set}(x_i^{\exists})}$ doesn't envy anyone. Analogous reasoning holds for the agents $a_{\mathsf{set}(\neg x_i^{\exists})}$: They must get allocated all of the resources that $a_{\mathsf{set}(\neg x_i^{\exists})}$ wants, except for $o_{x_i^{\exists}}$. Allocating these resources to $a_{\mathsf{set}(\neg x_i^{\exists})}$ makes his utility equal to $\alpha[a_{\mathsf{set}(\neg x_i^{\exists})}, o_{x_i^{\exists}}]$, and therefore it is ensured that $a_{\mathsf{set}(\neg x_i^{\exists})}$ doesn't envy anyone.

For all pairs of universal variable assignment agents $a_{\mathsf{set}(x_i)}$ and $a_{\mathsf{set}(\neg x_i)}$, we have the following situation: the total possible utility that both agents can get is 4: they both have four resources that they want, and they both have a utility of 1 for each resource. Also they both want exactly the same four resources. However, we already concluded that the resources $o_{\mathsf{set}(x_i)}^{\mathsf{compensation}}$ and $o_{\mathsf{set}(\neg x_i)}^{\mathsf{compensation}}$ must be allocated to $a_{\mathsf{unassigned}}$. According to this requirement, the total possible utility that both agents can still get is 3. $a_{\mathsf{set}(x_i)}$ and $a_{\mathsf{set}(\neg x_i)}$ are the only agents that can have a positive utility-coefficient for the resource $o_{x_i^{\forall}}$, so we can only allocate this resource to one of these two agents. If we allocate it to either agent, say $a_{\mathsf{set}(l_{x_i})}$, then the other agent $a_{\mathsf{set}(\neg l_{x_i})}$ will envy $a_{\mathsf{set}(l_{x_i})}$ unless he gets allocated one of the other two resources that are left ($x_{\mathsf{set}(x_i)}^{\mathsf{helper}}$ and $x_{\mathsf{set}(\neg x_i)}^{\mathsf{helper}}$). We can choose either one to allocate to $a_{\mathsf{set}(\neg l_{x_i})}$. After we have done this, our only possibility is to allocate the other resource to $a_{\mathsf{set}(\neg l_{x_i})}^{\mathsf{helper}}$ (if we allocate it to $a_{\mathsf{set}(l_{x_i})}$ or $a_{\mathsf{set}(\neg l_{x_i})}$ then there will be envy among $a_{\mathsf{set}(l_{x_i})}$ and $a_{\mathsf{set}(\neg l_{x_i})}$).

For the universal literal resources, the following holds. A universal literal resource $o_{c,l_{x_i^{\forall}}}$ must be allocated to $a_{\mathsf{set}(x_i^{\forall})}^{\mathsf{helper}}$ if $o_{\mathsf{set}(x_i^{\forall})}^{\mathsf{helper}}$ is not assigned to $a_{\mathsf{set}(l_{x_i^{\forall}})}^{\mathsf{helper}}$, or else $a_{\mathsf{set}(l_{x_i^{\forall}})}^{\mathsf{helper}}$ will envy either $a_{\mathsf{set}(x_i^{\forall})}$ or $a_{\mathsf{set}(\neg x_i^{\forall})}$. In the case that $o_{\mathsf{set}(x_i^{\forall})}^{\mathsf{helper}}$ is assigned to $a_{\mathsf{set}(l_{x_i^{\forall}})}^{\mathsf{helper}}$, we have the possibility to allocate $o_{c,l_{x_i^{\forall}}}$ to one of the agents in $\{ a_c, a_{c,l_{x_i^{\forall}}}^{\mathsf{envyprotection}}, a_{\mathsf{set}(l_{x_i^{\forall}})}^{\mathsf{helper}}\}$. But if we would allocate $o_{c,l_{x_i^{\forall}}}$ to $a_c$, then $a_{c,l_{x_i^{\forall}}}^{\mathsf{envyprotection}}$ would envy $a_c$ because $a_c$ has the bundle of items $\{o_c, o_{c,l_{x_i^{\forall}}}\}$. Having this bundle would give $M+1$ to $a_{c,l_{x_i^{\forall}}}^{\mathsf{envyprotection}}$, and $a_{c,l_{x_i^{\forall}}}^{\mathsf{envyprotection}}$ has currently only $M$ utility. So we cannot allocate $o_{c,l_{x_i^{\forall}}}$ to $a_c$, and the only possibilities left are to assign $o_{c,l_{x_i^{\forall}}}$ to either $a_{c,l_{x_i^{\forall}}}^{\mathsf{envyprotection}}$ or $a_{\mathsf{set}(l_{x_i^{\forall}})}^{\mathsf{helper}}$.

The requirements we just described clearly restrict the set of allocations that are possibly EEF, to the set of $X_{\forall}$-allocations.
\end{proof}

\begin{lemma}
\label{correctness3}
Given an $X_{\forall}$-assignment $s$ for $I$, and the $X_{\forall}$-allocation $a$ in $I'$ that corresponds to $s$. If the propositional 3CNF-formula $C$ is satisfiable on $s$, then there is an allocation $a'$ that Pareto-dominates $a$.
\end{lemma}
\begin{proof}
Let $s$ be the $X_{\forall}$-assignment and $a$ be the corresponding $X_{\forall}$-allocation. 
Given $a$, it is possible to reallocate some resources to yield a Pareto-dominating allocation $a'$ where the utility of $a_{\mathsf{unassigned}}$ is increased, and the utility of the other agents is at least as high as in $a$.

First note that the only way to increase the utility of $a_{\mathsf{unassigned}}$ is to reallocate the resource $o_{\mathsf{satisfied}}$ from $a_{\mathsf{satisfied}}$ to $a_{\mathsf{unassigned}}$. If this happens, then $a_{\mathsf{unassigned}}$ gets $|X_{\forall}| + |X_{\exists}| + |C| + 1$ extra utility, so in that case $a_{\mathsf{unassigned}}$ can lose $|X_{\forall}| + |X_{\exists}| + |C|$ utility, and he will still have higher utility than in $a$. We can only move $o_{\mathsf{satisfied}}$ to $a_{\mathsf{unassigned}}$ if we reallocate all of the clause resources to $a_{\mathsf{satisfied}}$, otherwise the utility of $a_{\mathsf{satisfied}}$ would be too low. If we reallocate all of these clause resources, then all clause agents would lose $M$ utility. We can compensate this by reallocating all of the clause compensation resources to the clause agents (this gives $M-1$ utility to each clause agent). There are two problems with this move: first of all, by doing this, $a_{\mathsf{unassigned}}$ loses $|C|$ utility; and secondly each clause resource only gets $M-1$ utility, so we need to allocate each clause resource at least 1 more utility in order to compensate for the loss of $M$ utility of each clause agent. The first problem turns out not to be a problem at all, because $a_{\mathsf{unassigned}}$ has a ``surplus'' of $|X_{\forall}| + |X_{\exists}| + |C|$ utility, and by reallocating all clause compensation resources, $a_{\mathsf{unassigned}}$ loses only $|C|$ utility, so $a_{\mathsf{unassigned}}$ is still allowed to lose $|X_{\forall}| + |X_{\exists}|$ utility. The second problem can be remedied by reallocating at least 1 literal resource to each clause agent. A clause agent $a_{c_i}$ has a utility-coefficient of 1 for a literal resource $o_{c_i,l}$ and a utility-coefficient of 0 for all other literal resources. Literal resources can be either existential literal resources or universal literal resources:
\begin{enumerate}
\item For any universal variable we can execute the following procedure. {\bf Step 1:} for all $o_{x_i^{\forall}}^{\mathsf{compensation}} \in \mathcal{O}_{\mathsf{uvcr}}$, if $x_i^{\forall}$ is assigned to true (false) in $s$, reallocate $o_{x_i^{\forall}}^{\mathsf{compensation}}$ from $a_{\mathsf{unassigned}}$ to the universal variable assignment agent that has currently got the resource $o_{\mathsf{set}(x_i)}^{\mathsf{helper}}$ ($o_{\mathsf{set}(\neg x_i)}^{\mathsf{helper}}$). $a_{\mathsf{unassigned}}$ loses $|X_{\forall}|$ utility by this move, so there is still $|X_{\exists}|$ utility to ``spend'' for $a_{\mathsf{unassigned}}$. {\bf Step 2:} if $x_i^{\forall}$ is assigned to true (false) in $s$, reallocate the item $o_{\mathsf{set}(x_i)}^{\mathsf{helper}}$ ($o_{\mathsf{set}(\neg x_i)}^{\mathsf{helper}}$) from $a_{\mathsf{set}(x_i)}$ to $a_{\mathsf{set}(x_i)}^{\mathsf{helper}}$ ($a_{\mathsf{set}(\neg x_i)}^{\mathsf{helper}}$). Note that by executing steps 1 and 2, no universal variable assignment agent loses any utility. {\bf Step 3:} if $x_i^{\forall}$ is assigned to true (false) in $s$, then for all of the literal resources $o_{c, x_i^{\forall}} \in \{o_{c,x_i^{\forall}} | x_i \in c \}$ ($o_{c, \neg x_i^{\forall}} \in \{o_{c,\neg x_i^{\forall}} | x_i^{\forall} \in c \}$), we move $o_{c, x_i^{\forall}}$ ($o_{c, \neg x_i^{\forall}}$)from $a_{\mathsf{set}(x_i^{\forall})}^{\mathsf{helper}}$ ($a_{\mathsf{set}(\neg x_i^{\forall})}^{\mathsf{helper}}$) to $a_c$. Note that by this last step, no universal variable assignment helper agent loses any utility. 

When we execute the procedure we just described, we can move a certain set of clause resources to $a_{\mathsf{satisfied}}$ without lowering anyone's utility. By construction, this set of clause resources corresponds exactly the set of clauses that are satisfied by the $X_{\forall}$-assignment $s$. The clause resources that correspond to clauses that still need to get satisfied, still need to get reallocated. We will see how to do this in the next step:
\item For any existential variable $x_i^{\exists}$, an existential literal resource $o_{c,l_{x_i^{\exists}}}$ is allocated in $a$ to an existential variable assignment agent $a_{\mathsf{set}(l_{x_i^{\exists}})}$. If we reallocate $o_{c,l_{x_i^{\exists}}}$ to $a_{c}$, then we have to compensate this by moving an $o_{x_i}$ to $a_{\mathsf{set}(x_i^{\exists})}$. In $a$, $o_{x_i}$ is allocated to $a_{\mathsf{unassigned}}$. So if we reallocate all of the existential variable resources, $a_{\mathsf{unassigned}}$ loses $|X_{\exists}|$ utility. So after reallocating the existential variable resources, $a_{\mathsf{unassigned}}$ may not lose any utility anymore, since we still want $a_{\mathsf{unassigned}}$ to have strictly greater utility than in $a$. We can reallocate an existential variable resource $o_{x_i^{\exists}}$ only to $a_{\mathsf{set}(x_i^{\exists})}$ or $a_{\mathsf{set}(\neg x_i^{\exists})}$. Altogether this means that for any existential variable $x_i^{\exists}$, we can give extra utility to either the clause agents $\{a_{c} | x_i \in c \}$ or to the clause agents $\{a_{c} | \neg x_i \in c \}$. Remember that this extra utility is needed for the clause agents in order to be able to reallocate the clause resources to $a_{\mathsf{satisfied}}$. In this sense, reallocating all of the clause resources to $a_{\mathsf{satisfied}}$ is equivalent to finding a truth-assignment for the unassigned variables of $s$ such that $C$ is satisfied. Such an assignment exists by our assumption, hence from $a$, a Pareto-improvement to $a'$ is possible. $a'$ is clearly not EEF because it is not an $X_{\forall}$-allocation. (More concretely, in $a'$, $a_{\mathsf{satisfied}}$ envies $a_{\mathsf{unassigned}}$ because $a_{\mathsf{unassigned}}$ has the bundle of items $\{ a_{\mathsf{envy1}}, a_{\mathsf{satisfied}} \}$.)
\end{enumerate}
\end{proof}

\begin{lemma}
\label{correctness4}
Given an $X_{\forall}$-assignment $s$ for $I$, and the $X_{\forall}$-allocation $a$ in $I'$ that corresponds to $s$. If the propositional 3CNF-formula $C$ is unsatisfiable on $s$, then $a$ is EEF.
\end{lemma}
\begin{proof}
By lemma \ref{correctness1}, $a$ is envy-free, so we only need to show that $a$ is Pareto-optimal.
We do this by proving the following two things: 
\begin{enumerate}
\item There doesn't exist an allocation $a'$ that Pareto-dominates $a$ in which all clause-resources are allocated to $a_{\mathsf{satisfied}}$.
\item Any allocation $a'$ that Pareto-dominates $a$ must have all clause-resources allocated to $a_{\mathsf{satisfied}}$.
\paragraph{Proof for 1:}
This is a lot like our story in the previous lemma. We will try to make a Pareto-dominating allocation $a'$ where all clause resources are allocated to $a_{\mathsf{satisfied}}$. We do this by trying to transform $a$ into $a'$, and we will see that this is not possible. 

If we take $a$, and reallocate the clause resources to $a_{\mathsf{satisfied}}$, then all of the clause agents lose $M$ utility. The only way to compensate is reallocating all of the clause compensation resources to the clause agents and reallocating to every clause agent at least one literal resource. If we reallocate the clause compensation resources then the utility of $a_{\mathsf{unassigned}}$ is lowered by $|C|$ and needs to be compensated. The only way to do so is to reallocate $o_{\mathsf{satisfied}}$ to $a_{\mathsf{unassigned}}$. This is no problem: the utility of $a_{\mathsf{satisfied}}$ was $|C|$ in allocation $a$, and now it is still $|C|$. 

The reallocation of at least one literal resource to every clause agent is going to be the problem. There are literal resources allocated to four types of agents:
\begin{itemize}
\item Some universal literal resources may in $a$ be allocated to universal literal envy-protection agents. It is impossible to reallocate such literals because it is impossible to compensate the utility of these agents by giving them another resource. The only resources these agents want but don't have are the clause resources, but the are already reallocated to $a_{\mathsf{satisfied}}$.
\item Some universal literal resources may in $a$ be allocated to universal variable assignment helper agents $a_{\mathsf{set}(l)}^{\mathsf{helper}}$ who already have resource $o_{\mathsf{set}(l)}^{\mathsf{helper}}$. It is impossible to reallocate these literal resources: the only resources that $a_{\mathsf{set}(l)}^{\mathsf{helper}}$ wants but doesn't have can be literal resources that are allocated to a universal literal envy-protection agent. We can not reallocate these literal resources, as we argued in the previous item of this list.
\item Some universal literal resources may in $a$ be allocated to universal variable assignment helper agents $a_{\mathsf{set}(l)}^{\mathsf{helper}}$ who do not have resource $o_{\mathsf{set}(l)}^{\mathsf{helper}}$. In this case, it is possible to reallocate these literal resources to the clause agents. The only way to compensate the loss of utility of agent $a_{\mathsf{set}(l)}^{\mathsf{helper}}$ by reallocating the resource $o_{\mathsf{set}(l)}^{\mathsf{helper}}$ from one of the two universal variable assignment agents to $a_{\mathsf{set}(l)}^{\mathsf{helper}}$. Subsequently we can compensate the loss of the universal variable assignment agent by reallocating a universal variable compensation resource from $a_{\mathsf{unassigned}}$ to him. $a_{\mathsf{unassigned}}$ may lose all of its universal variable compensation resources. Its utility will still remain higher than it was in $a$ because it has received the resource $o_{\mathsf{satisfied}}$.

Just as in the previous lemma, the procedure we just mentioned will add at least 1 extra utility to a certain set of clause agents. This set of clause agents are the clause agents that correspond to the clauses that are satisfied by the partial truth-assignment $s$. 
\item The existential literal resources are allocated to the existential variable assignment agents. It is possible to reallocate some of these existential literal resources to the clause agents. As we already pointed out in the previous lemma, for any existential variable $x_i^{\exists}$, we can give extra utility to either the clause agents $\{a_{c} | x_i \in c \}$ or to the clause agents $\{a_{c} | \neg x_i \in c \}$. 
\end{itemize}

So, we can give a literal to the clause agents that correspond to clauses satisfied by $s$. And the remaining clause agents we can give a literal if it is possible to reallocate an existential literal resource to these clause agent. This is obviously equivalent to finding a truth-assignment to the variables in $X_{\exists}$ that satisfies formula $C$ on $s$. By our assumption such a truth-assignment doesn't exist, so there exists no allocation $a'$ that Pareto-dominates $a$ in which all clause-resources are allocated to $a_{\mathsf{satisfied}}$.
\paragraph{Proof for 2:}
For each agent, we show that we can only transform $a$ to a Pareto-dominating allocation $a'$ and increase that agent's utility if we allocate all clause-resources to $a_{\mathsf{satisfied}}$.
\begin{description}
\item[For agent $a_{\mathsf{unassigned}}$:] The only way to improve the utility of $a_{\mathsf{unassigned}}$ is to reallocate $o_{\mathsf{satisfied}}$ from $a_{\mathsf{satisfied}}$ to $a_{\mathsf{unassigned}}$. But then sat would lose $|C|$ utility. The only way to remedy this is to reallocate all of the $|C|$ clause-resources to $a_{\mathsf{satisfied}}$.
\item[For all existential variable assignment assignment agents $a_{\mathsf{set}(l_{x_i^{\exists}})}$:] The only way to increase the utility of $a_{\mathsf{set}(l_{x_i^{\exists}})}$ is to reallocate $o_{x_i^{\exists}}$ from $a_{\mathsf{unassigned}}$ to $a_{\mathsf{set}(l_{x_i^{\exists}})}$. Now, $a_{\mathsf{unassigned}}$ loses 1 utility, so we need to increase $a_{\mathsf{unassigned}}$'s utility by allocating him the resource $o_{\mathsf{satisfied}}$. So we fall back to the case for $a_{\mathsf{unassigned}}$.
\item[For all universal variable assignment agents $a_{\mathsf{set}(l_{x_i^{\forall}})}$:] Reallocating $a_{\mathsf{set}(x_i^{\forall})}^{\mathsf{compensation}}$ to $a_{\mathsf{set}(l_{x_i^{\forall}})}$ is not possible. In that scenario we would again fall back to the case for $a_{\mathsf{unassigned}}$. Reallocating an item from $a_{\mathsf{set}(\neg l_{x_i^{\forall}})}$ to $a_{\mathsf{set}(l_{x_i^{\forall}})}$ does not help. This action removes 1 utility from $a_{\mathsf{set}(\neg l_{x_i^{\forall}})}$. Hence we would need to be able to increase the utility of $a_{\mathsf{set}(\neg l_{x_i^{\forall}})}$, but $a_{\mathsf{set}(l_{x_i^{\forall}})}$ and $a_{\mathsf{set}(\neg l_{x_i^{\forall}})}$ have exactly the same utility coefficients, i.e. they are clones of each other. So, needing to improve the utility of $a_{\mathsf{set}(\neg l_{x_i^{\forall}})}$ is the same problem as needing to improve the utility of $a_{\mathsf{set}(l_{x_i^{\forall}})}$. 

We can try one more thing to improve the utility of $a_{\mathsf{set}(l_{x_i^{\forall}})}$ or $a_{\mathsf{set}(l_{x_i^{\forall}})}$. Given that $x_i^{\forall}$ is true in $s$ (if $x_i^{\forall}$ is false in $s$, the reasoning is analogous), we can try to increase the utility of $a_{\mathsf{set}(l_{x_i^{\forall}})}$ or $a_{\mathsf{set}(\neg l_{x_i^{\forall}})}$ by reallocating to either agent the resource $o_{\mathsf{set}(\neg x_i^{\forall})}^{\mathsf{helper}}$ from $a_{\mathsf{set}(\neg x_i^{\forall})}^{\mathsf{helper}}$. Because we remove half of the total possible utility of $a_{\mathsf{set}(\neg x_i^{\forall})}^{\mathsf{helper}}$ with this move, this move can only possibly be done if in $a$, $o_{\mathsf{set}(\neg x_i^{\forall})}^{\mathsf{helper}}$ is the only resource that $a_{\mathsf{set}(\neg x_i^{\forall})}^{\mathsf{helper}}$ has. But even in this case it will turn out that it's impossible: it is easily seen that it would require reallocating a clause resource to a universal literal envy-protection agent, without lowering anyone's utility below the utility he has in allocation $a$. We will show that this is not possible when we arrive at the case for the universal literal envy-protection agents.
\item[For all universal variable assignment helper agents $a_{\mathsf{set}(l_{x_i^{\forall}})}^{\mathsf{helper}}$:] There are two cases here: either $a_{\mathsf{set}(l_{x_i^{\forall}})}^{\mathsf{helper}}$ has resource $o_{\mathsf{set}(l_{x_i^{\forall}})}^{\mathsf{helper}}$ or $a_{\mathsf{set}(l_{x_i^{\forall}})}^{\mathsf{helper}}$ doesn't have resource $o_{\mathsf{set}(l_{x_i^{\forall}})}^{\mathsf{helper}}$. In the former case, it is possible to try to increase the utility of $a_{\mathsf{set}(l_{x_i^{\forall}})}^{\mathsf{helper}}$ by reallocating the resource $o_{\mathsf{set}(l_{x_i^{\forall}})}^{\mathsf{helper}}$ to $a_{\mathsf{set}(l_{x_i^{\forall}})}^{\mathsf{helper}}$. If we do this, then $a_{\mathsf{set}(l_{x_i^{\forall}})}$ gets into trouble and we have to increase his utility. Therefore we fall back to the case for $a_{\mathsf{set}(l_{x_i^{\forall}})}$.

In the second case we can only try to increase the utility of $a_{\mathsf{set}(l_{x_i^{\forall}})}^{\mathsf{helper}}$ by allocating to him a literal resource for which he has a non-zero utility-coefficient. If there is such a literal resource, then it is allocated in $a$ to a universal literal envy-protection agent. This would require reallocating a clause resource to a universal literal envy-protection agent, without lowering anyone's utility below the utility he has in allocation $a$. We will show that this is not possible when we arrive the case for the universal literal envy-protection agents.
\item[For all clause agents $a_{c_i}$:] The utility of a clause agent $a_{c_i}$ can only be improved by reallocating 1 or more of the literal-resources to him for which $a_{c_i}$ has non-zero utility. Let $o_{c_i, l}$ be this literal resource. If we reallocate $o_{c_i, l}$ to $a_{c_i}$, then we would in turn need to improve the utility of an existential variable assignment agent, universal variable assignment helper agent or a universal literal envy-protection agent. For the first two, we refer back to their cases, that we already handled in this list. For the last one, the universal literal envy-protection agent, we will show that it's impossible to increase his utility. We arrive at this case now:
\item[For all universal literal envy-protection agents $a_{c,l}^{\mathsf{envyprotection}}$:] The only way to increase the utility of $a_{c,l}^{\mathsf{envyprotection}}$ is to reallocate the clause resource $o_c$ from $a_c$ to $a_{c,l}^{\mathsf{envyprotection}}$. By this move, $a_c$ would lose $M$ utility. To compensate it we need at least to allocate $o_{c}^{\mathsf{compensation}}$ from $a_{\mathsf{unassigned}}$ to $a_c$. This implies that we will need to increase the utility of $a_{\mathsf{unassigned}}$. From the case we already handled for agent $a_{\mathsf{unassigned}}$, we conclude that we would need to assign all of the clause resources to $a_{\mathsf{satisfied}}$. 
\item[For agent $a_{\mathsf{satisfied}}$:] We can try to reallocate one or more clause-resources to $a_{\mathsf{satisfied}}$. If we do that, then we need to improve the utility of at least one clause agent. As shown as a previous case in this list, improving the utility of this clause agent implies that we need to move all of the clause resources to $a_{\mathsf{satisfied}}$. Another possibility is to try to reallocate $a_{\mathsf{envy1}}$ to $a_{\mathsf{satisfied}}$. But then we would need to improve the utility of $a_{\mathsf{unassigned}}$. As shown, this implies that we would need to move all of the clause resources to $a_{\mathsf{satisfied}}$.
\item[For agent $a_{\mathsf{unassigned}}^{\mathsf{envyprotection}}$:] Obviously we can not improve the utility for this agent, because in $a$ it already has gotten allocated the single resource that he wants.
\end{description} 
\end{enumerate} 
\end{proof}
\end{proof}

\bibliographystyle{plain}
\bibliography{taskallocation}

\end{document}